\documentclass[letterpaper, amsfonts, amssymb, amsmath, pra, aps, reprint, onecolumn, showkeys, nofootinbib, 11pt]{revtex4-2}
\usepackage[margin=1in]{geometry}

\usepackage{setspace}
\usepackage{array}
\usepackage{amsmath}
\usepackage{amsfonts}
\usepackage{amsthm}
\usepackage{algorithm}
\usepackage{algorithmic}
\usepackage{bm, bbm}
\usepackage{amssymb}
\usepackage{booktabs} 
\usepackage{enumerate}
\usepackage{graphicx}
\usepackage{multirow}
\usepackage{physics}
\usepackage{setspace}
\usepackage{threeparttable}
\usepackage{upgreek}
\usepackage{subcaption}
\usepackage{xcolor}
\usepackage{hyperref}
\usepackage{natbib}

\renewcommand{\mu}{\upmu}
\hypersetup{
    colorlinks = true,
    allcolors = black
}

\graphicspath{{figures/}}

\newcommand{\beginsupplement}{%
    \setcounter{table}{0}
    \renewcommand{\thetable}{S\arabic{table}}%
    \setcounter{figure}{0}
    \renewcommand{\thefigure}{S\arabic{figure}}%
    \renewcommand{\thesubsection}{\arabic{subsection}}
 }

\newtheorem{theorem}{\bf Theorem}
\newtheorem{lem}{\bf Lemma}



\begin{document}
\begin{titlepage}
\title{\Large\bfseries Quantum Divide-and-Conquer for the Traveling Salesman Problem: Surpassing the $2^n$ Barrier}

\author{Xujun Bai$^{1, 2}$}
\author{Yun Shang$^{1, 3, \dag, *}$}
\author{Honghong Lin$^{1, 2}$}

\affiliation{
$^1$Institute of Mathematics, Academy of Mathematics and Systems Science, Chinese Academy of Sciences, Beijing 100190, China \\
$^2$School of Mathematical Sciences, University of Chinese Academy of Sciences, Chinese Academy of Sciences, Beijing, 100049, China \\
$^3$State Key Laboratory of Mathematical Sciences, Academy of Mathematics and Systems Science, Chinese Academy of Sciences, Beijing, 100190, China
}

\let\thefootnote\relax\footnotetext{$^\dag$ shangyun@amss.ac.cn}
\let\thefootnote\relax\footnotetext{$^*$ Corresponding authors }

\begin{abstract}
    The traveling salesman problem (TSP) is a classic NP-hard problem. Held--Karp dynamic programming~\cite{held1962dynamic, bellman1962dynamic} solves it exactly in $O(n^2 2^n)$ time, a barrier that has stood for over six decades. Whether quantum computing can surpass $O^*(2^n)$ is a central open question. The authors of~\cite{ambainis2019quantum}\ (SODA~2019) claimed a query complexity of $O^*(1.727^n)$, but we identify a structural counting error: when corrected, their scheme requires $\Omega^*(2^n)$ queries and offers no advantage over classical Held--Karp.
    We design a quantum divide-and-conquer framework: partition $n$-vertex set into $k$ subsets, classically precompute shortest paths within each, then search over all $k$-partitions via quantum minimum finding. We prove $k=3$ achieves $O^*(1.890^n)$, and $k=4$ attains the global optimum $O^*(1.866^n)$, the first quantum algorithm to surpass $O^*(2^n)$ for general TSP. To convert the query bound into a time-complexity advantage, we overcome the oracle-construction bottleneck by preparing a set partition state, requiring $O(n^2)$ gates and $O(n)$ depth. Leveraging structured state preparation, we achieve a total time complexity of $O^*(1.866^n)$. Qiskit simulations on $n=6,7$ achieve $98.9\%$ and $100\%$ accuracy.
\end{abstract}

\keywords{quantum divide-and-conquer, set partition, quantum search, traveling salesman problem (TSP)}
\maketitle
\end{titlepage}

\section{Introduction}
The traveling salesman problem (TSP) asks: given $n$ cities and pairwise distances, find the shortest tour that visits each city exactly once and returns to the start. TSP has been thoroughly studied in terms of both exact and approximation algorithms~\cite{laporte1992traveling, chauhan2012survey}. The best known classical exact algorithm for general TSP is the Held-Karp dynamic programming~\cite{held1962dynamic, bellman1962dynamic}, which runs in $O(n^2 2^n)$ time. Despite decades of effort, no classical algorithm for general TSP has broken this $2^n$ barrier. A longstanding open question is whether quantum computing can improve on Held-Karp. 
Quantum computing offers a quadratic speedup for unstructured search: Grover's algorithm~\cite{grover1996fast} can find the marked element among $N$ candidates using $O(\sqrt{N})$ queries. A direct application to TSP would search over all $(n-1)!$ Hamiltonian cycles, giving $O^*(\sqrt{(n-1)!}) \approx 2^{\frac{n}{2}\log_2 n}$ queries~\cite{bai2025quantum}, far worse than $2^n$. The challenge is to combine classical dynamic programming with quantum search to surpass $O^*(2^n)$.

We design a quantum divide-and-conquer framework parameterized by the number of subsets $k$ and their sizes $m_1\geq\dots\geq m_k$. The idea is to precompute shortest paths within subsets of size at most $m_1$ via classical dynamic programming, then search over all set $k$-partitions with labeled origins and ends in a single coherent quantum process using quantum minimum finding. {\bf Lemma~\ref{thm2}} shows that the optimal tour can always be expressed as a concatenation of such paths, so the search is guaranteed to find it. The framework unifies the Held-Karp ($k=1$) and pure Grover search ($k=n$) as two extremes of a parameterized spectrum. Treating $k$ and $m_i$ as free parameters allows us to optimize the classical-quantum balance.
We fully characterize this trade-off: $k=2$ yields no speedup, $k=3$ achieves $O^*(1.890^n)$, \textbf{$k=4$ attains the global optimum $O^*(1.866^n)$}, and $k>4$ cannot be better than $k=4$. This is the first quantum algorithm to surpass $O^*(2^n)$ for general TSP in query complexity.
The hybrid algorithm proposed by Ambainis et al.~\cite{ambainis2019quantum} (SODA~2019) corresponds to an $8$-subset special case in our framework, claiming $O^*(1.727^n)$, but we identify a structural counting error: correctly analyzed, their scheme never falls below $\Omega^*(2^n)$.

\subsection*{Our contribution and techniques}
Our main contributions and techniques are summarized as follows.

{\bf Correct the query complexity claimed in~\cite{ambainis2019quantum}.}
The algorithm for TSP in~\cite{ambainis2019quantum} recursively applies quantum minimum finding over subsets of sizes $n,n/2$ and $n/4$, with the subsets of size $(1-\alpha)n/4$ or $\alpha n/4$ precomputed classically.
The quantum part of the algorithm must be executed as a single coherent quantum search. The oracle must therefore concatenate all possible partial paths according to the recursive tree (see {\bf Figure~\ref{fig: hDP_tree}}) before marking complete solutions. 
We prove that the number of solutions that the oracle of the quantum part must traverse is the number of $8$-tuples formed by red-blue leaves at level $3$ in {\bf Figure~\ref{fig: hDP_tree}}, i.e.,
\begin{equation}
    N = O^*\left(\binom{n}{n/2}\binom{n/2}{n/4}^2\binom{n/4}{\alpha n/4}^4\right), \label{equ: intro1}
\end{equation}
not the number of leaf nodes in the recursive tree, i.e., $O^*\left(\binom{n}{n/2}\binom{n/2}{n/4}\binom{n/4}{\alpha n/4}\right)$ counted by the authors of~\cite{ambainis2019quantum}.
Under Grover's algorithm~\cite{grover1996fast}, quantum minimum finding requires $O(\sqrt{N})$ queries, yielding the correct query complexity
\begin{equation}
    O(\sqrt{N}) = O^*\left(2^n\binom{n/4}{\alpha n/4}^2\right)=O^*(2^{n(1+H(\alpha)/2)}). \label{equ: intro2}
\end{equation}
Since the binary entropy $H(\alpha)>0$, the corrected query complexity satisfies $O(\sqrt{N})> \Omega^*(2^n)$ for any valid $\alpha\in(0,1)$. ({\bf Theorem~\ref{thm1}})
\begin{figure}[H]
    \centering
    \includegraphics[width=.8\linewidth]{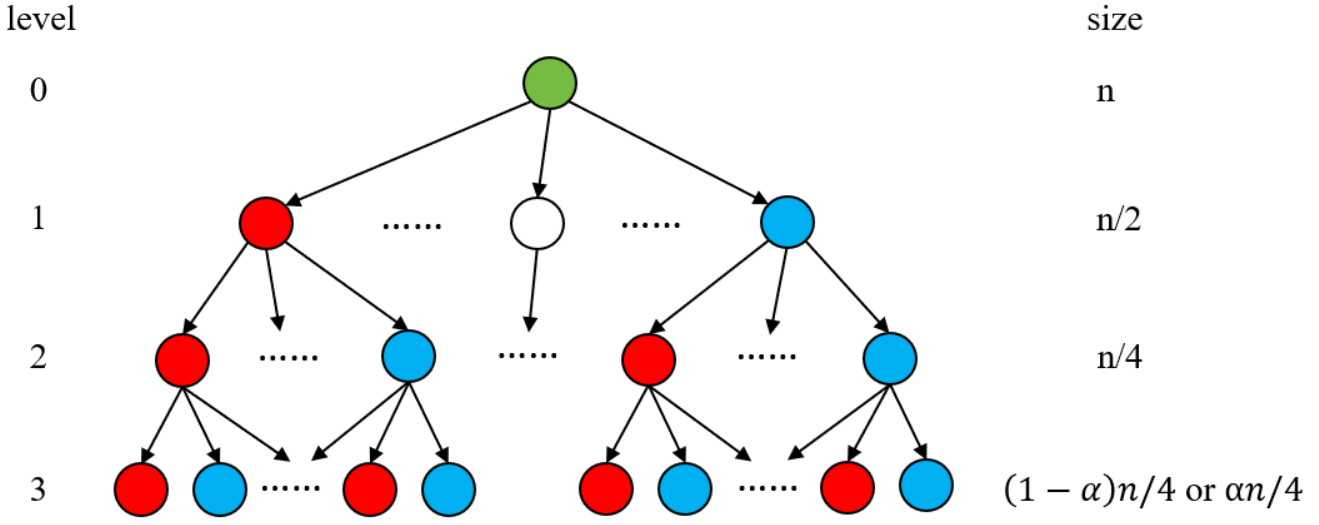}

    \caption{Recursive tree of {\bf Algorithm~\ref{alg: hDP-TSP}}~\cite{ambainis2019quantum}. Obtaining a complete solution requires combining the eight leaves at level 3.}
    \label{fig: hDP_tree}
\end{figure}

{\bf Quantum divide-and-conquer and the optimal query complexity.}  
We develop a quantum divide-and-conquer strategy to search over all feasible set partition schemes. Each scheme is characterized by the number of subsets $k$ and the proportions $\alpha_i=m_i/n$ satisfying
\begin{equation*}
    1>\alpha_1\geq\alpha_2\geq\dots\geq\alpha_{k-1}\geq1-\sum_{i=1}^{k-1}\alpha_i>0.
\end{equation*}
We can first find all shortest paths from an origin to an end for all subsets of size at most $\alpha_1n$ by classical dynamic programming, and then apply quantum divide-and-conquer to search for the shortest Hamiltonian cycle. The query complexities of the classical and quantum parts are $O^*(2^{nH(\alpha_1)})$ and 
\begin{equation*}
    O^*\left(\sqrt{\binom{n}{\alpha_1n}\binom{(1-\alpha_1)n}{\alpha_2n}\dots\binom{(1-\sum_{i=1}^{k-2}\alpha_i)n}{\alpha_{k-1}n}}\right),
\end{equation*}
respectively. We prove that the total query complexity reaches the optimal value $O^*(1.865666\dots^n)$ when $k=4$ and $\alpha_1=\alpha_2=\alpha_3=0.315742\dots$, which is obtained when the classical query complexity equals the quantum query complexity. ({\bf Theorem~\ref{thm3}}) The number given by (\ref{equ: intro1}) is equal to the number of set partitions of an $n$-element set into four subsets of size $(1-\alpha)n/4$ and four subsets of size $\alpha n/4$, indicating that the hybrid algorithm in~\cite{ambainis2019quantum} corresponds to a special case when $k=8$ and $\alpha_1=\ ...\ =\alpha_4=(1-\alpha)/4,\ \alpha_5=\ ...\ =\alpha_8=\alpha/4$.

{\bf Structured state preparation.}
Query complexity alone does not guarantee a genuine speedup. To mark a complete solution, the oracle must compute the lengths of Hamiltonian cycles corresponding to set partitions from the classically precomputed shortest paths of the subsets; yet classical precomputation of these cycle lengths would nullify the quantum speedup, so the evaluation must be done in quantum parallel, which renders the oracle construction inherently nontrivial. We resolve this obstacle by encoding the set partitions explicitly in a structured quantum state. We utilize the Dicke state preparation to produce the uniform superposition state over all set $k$-partitions with labeled origins and ends, which we call the \emph{set partition state}. The circuit requires $O(n^2)$ gates and $O(n)$ depth without ancillary qubits. ({\bf Theorem~\ref{thm: setPartitionState}}) Leveraging structured state preparation, we achieve a total time complexity of $O^*(1.866^n)$, providing a genuine quantum advantage. Prior work on quantum advantage for NP-hard problems has focused solely on query complexity. Our work shows that the speedup also depends critically on structured quantum state preparation.

We conducted Qiskit simulations on $n=6,7$ to validate the framework. The simulations achieve $98.9\%$ and $100\%$ accuracy.

\subsection*{Related work}
We summarize previous results on TSP and our result in {\bf Table~\ref{tab: complexity_summary}}.
\begin{table}[h]
    \centering
    \caption{Summary of results for general TSP. All complexities are expressed in the $O^*(\cdot)$ sense (ignoring polynomial factors).}
    \label{tab: complexity_summary}
    \setlength{\tabcolsep}{3mm}
    \begin{tabular}{l c c}
        \toprule[1.2pt]
        Algorithm & Query Complexity & Oracle Explicitly Constructed? \\
        \midrule
        Held-Karp (classical DP)~\cite{held1962dynamic, bellman1962dynamic} & $2^{n}$ & -- \\
        claimed in~\cite{ambainis2019quantum} & $1.727391\dots^n$ & No \\[2pt]
        correction of~\cite{ambainis2019quantum} ({\bf Theorem~\ref{thm1}}) & $2^{(1+\frac{H(\alpha)}{2})n}> 2^n$ & -- \\[2pt]
        \textbf{This work} ($k=4$, optimal) & $\mathbf{1.865666\dots^n}$ & Yes \\
        \bottomrule[1.2pt]
    \end{tabular}
\end{table}

Constructing and executing the oracle is the central practical challenge for any quantum TSP algorithm. The algorithm in~\cite{ambainis2019quantum}, for instance, does not provide an explicit oracle construction, which makes its computational error harder to detect. Prior works on oracles have explored robust quantum minimum finding~\cite{quek2020robust}, oracles via phase kick-back~\cite{lee2016generalised, ossorio2023generalisation}, weak measurements~\cite{andres2022weakly}, recursive oracle expansion~\cite{burke2025deterministic}, and spinorial representations~\cite{mascarenhas2026quantum}, but constructing a practical oracle remains highly problem-dependent and challenging.

We address this difficulty by shifting the burden from oracle design to structured state preparation, using an architecture that separates index and value registers~\cite{bai2025quantum}: divide the qubits into an index register and a value register, prepare a uniform superposition over the combinatorial objects of interest in the index register, and compute the corresponding cost into the value register. By preparing the superposition state directly, we avoid the need for a bijection between sequential indices and combinatorial objects inside the oracle.

Quantum superposition states of combinatorial objects appear in quantum search~\cite{bai2025quantum}, quantum walks~\cite{marsh2020combinatorial}, and variational algorithms~\cite{bartschi2020grover} for combinatorial optimization. Examples range from Dicke states for vertex cover~\cite{jiang2025dicke} to Hamiltonian cycle superpositions for TSP~\cite{bai2025quantum} and permutation states for graph comparison~\cite{chiew2019graph}. In this work, we utilize the Dicke state preparation to produce uniform superpositions over set partitions with labels. We argue that such structured state preparation is necessary to realize the oracle while preserving the quantum speedup.

\subsection*{Paper organization}
Section~\ref{secII} corrects the analysis of~\cite{ambainis2019quantum} (Theorem~\ref{thm1}), develops the quantum divide-and-conquer framework and proves its correctness (Lemma~\ref{thm2}), and establishes the optimal query complexity (Theorem~\ref{thm3}).
Section~\ref{secIII} motivates the necessity of structured state preparation, introduces the set partition state preparation circuit (Theorem~\ref{thm: setPartitionState}), and describes the oracle implementation with QRAM.
Section~\ref{secIV} reports Qiskit simulation results on small instances.
Section~\ref{secV} discusses the applicability and limitations of our framework.

\section{Quantum divide-and-conquer} \label{secII}

\subsection{Revisit the algorithm for TSP in~\cite{ambainis2019quantum}} \label{secIIA}
Let $A=(\omega_{ij})_{n\times n}\in\mathbf{R}^{n\times n}$ be the weighted adjacency matrix of the complete undirected graph $G=(V,\ E)$ and $f(u,\ S)$ denote the cost of the shortest route that starts from $u\notin S$ and passes through each vertex in $S\subset V$ exactly once. The best known classical dynamic programming algorithm for TSP is provided as {\bf Algorithm~\ref{alg: cDP-TSP}}~\cite{held1962dynamic, bellman1962dynamic}, the time complexity of which is $O(n^2 2^n)$.

\begin{algorithm}[H]
\begin{algorithmic}[1]
\item[\ \ \ \ \textbf{Input:}] The weighted adjacency matrix $A=(\omega_{ij})_{n\times n}$. A special vertex that the salesman starts from, here we select $0\in V$.
\item[\ \ \ \ \textbf{Output:}] The shortest Hamiltonian cycle that starts and ends at the vertex $0$ and its cost.
\STATE For all vertices $u\neq0$, set $f(u,\ \emptyset)=\omega_{0u}$.
\FOR {$k=1$ to $n-1$}
    \IF {$k<n-1$}
        \STATE $\forall S\subset V\setminus\{0\}$ where $|S|=k$, $\forall u\in V\setminus(S\cup\{0\})$, calculate $f(u,\ S)=\min_{t\in S}\{f(t,\ S\setminus\{t\})+\omega_{ut}\}$. Record the minimizer $t^*$.
    \ELSE
        \STATE For $S=V\setminus\{0\}$, calculate the cost of the shortest Hamiltonian cycle of the graph $G$: $f(0,\ S)=\min_{t\in S}\{f(t,\ S\setminus\{t\})+\omega_{t0}\}$. Record the minimizer $t^*$.
    \ENDIF
\ENDFOR
\STATE Trace back through the DP table to retrieve the shortest Hamiltonian cycle.
\end{algorithmic}
\caption{The classical dynamic programming for TSP.}\label{alg: cDP-TSP}
\end{algorithm}

The recursive formulas for TSP used in~\cite{ambainis2019quantum} are 
\begin{equation}
    f(S,\ u,\ v)=\min_{\substack{t\in N(u)\cap S \\ t\neq v}}\{f(S\setminus\{u\},\ t,\ v)+\omega_{ut}\},\ \ \ f(\{v\},\ v,\ v)=0, \label{equ: ambainis1}
\end{equation}
\begin{equation}
    f(S,\ u,\ v)=\min_{\substack{X\subset S,\ |X|=k \\ u\in X,\ v\notin X}}\min_{\substack{t\in X \\ t\neq u}}\{f(X,\ u,\ t)+f((S\setminus X)\cup\{t\},\ t,\ v)\}, \label{equ: ambainis2}
\end{equation}

\begin{equation}
    f(V)=\min_{\substack{S\subset V \\ |S|=\frac{n}{2}}}\min_{\substack{u,\ v\in S \\ u\neq v}}\{f(S,\ u,\ v)+f((V\setminus S)\cup\{u,\ v\},\ v,\ u)\}, \label{equ: ambainis3}
\end{equation}
where $f(S,\ u,\ v)$ denotes the minimum cost of the routes that pass through all vertices in $S$ exactly once starting at $u\in S$ and ending at $v\in S$, $f(V)$ denotes the cost of the shortest Hamiltonian cycle of $G$, and $N(u)$ is the set of neighborhoods of $u$. The algorithm proposed in~\cite{ambainis2019quantum} is provided as {\bf Algorithm~\ref{alg: hDP-TSP}}.

\begin{algorithm}[H]
\begin{algorithmic}[1]
\item[\ \ \ \ \textbf{Input:}] The weighted adjacency matrix $A=(\omega_{ij})_{n\times n}$. Let $\alpha\leq1/2$ to ensure $\alpha n/4\leq(1-\alpha)n/4$.
\item[\ \ \ \ \textbf{Output:}] The shortest Hamiltonian cycle and its cost.
\STATE Calculate $f(S,\ u,\ v)$ for all $|S|\leq(1-\alpha)n/4$ classically using the formula (\ref{equ: ambainis1}) and store them in memory.
\STATE Run quantum minimum finding over all subsets $S\subset V$ where $|S|=n/2$ to calculate the answer $f(V)$ using the formula (\ref{equ: ambainis3}). To calculate $f(S,\ u,\ v)$ for all $|S|=n/2$, run quantum minimum finding using the formula (\ref{equ: ambainis2}) with $k=n/4$. To calculate $f(S,\ u,\ v)$ for all $|S|=n/4$, run quantum minimum finding using the formula (\ref{equ: ambainis2}) with $k=\alpha n/4$. For $|S|=\alpha n/4$ or $|S|=(1 - \alpha)n/4$, we know $f(S,\ u,\ v)$ from the classical process.
\end{algorithmic}
\caption{The hybrid dynamic programming for TSP.}\label{alg: hDP-TSP}
\end{algorithm}

The classical step 1 of {\bf Algorithm~\ref{alg: hDP-TSP}} requires the complexity of
\begin{equation}
    O^*\left(\binom{n}{\leq (1-\alpha)n/4}\right)=O^*(2^{H(\frac{1-\alpha}{4})n}).
    \label{equ: com1}
\end{equation}
Counting the number of distinct subset selection at step 2 of {\bf Algorithm~\ref{alg: hDP-TSP}}, the authors of~\cite{ambainis2019quantum} derive the query complexity of the quantum part as
\begin{equation}
    O^*\left(\sqrt{\binom{n}{n/2}\binom{n/2}{n/4}\binom{n/4}{\alpha n/4}}\right)=O^*(2^{\frac{1}{2}(1+\frac{1}{2}+\frac{H(\alpha)}{4})n}).
    \label{equ: com2}
\end{equation}
In formulas (\ref{equ: com1}) and (\ref{equ: com2}), $O^*$ indicates ignoring the polynomial factors, and the entropy approximation is used:
\begin{equation}
    \begin{aligned}
        \binom{n}{k}&\leq2^{nH(k/n)}, \\
        \binom{n}{\leq k}&=\sum_{i=0}^{k}\binom{n}{i}\leq2^{nH(k/n)},
        \label{equ: entropy-approx}
    \end{aligned}
\end{equation}
where $H(\epsilon)=-[\epsilon\log_2\epsilon+(1-\epsilon)\log_2(1-\epsilon)],\ \epsilon\in[0,\ 1]$ is the binary entropy. The authors of~\cite{ambainis2019quantum} argued that the overall complexity is minimized when the complexities of the classical part and the quantum part are equal, which gives the optimal $\alpha\approx 0.055362$ and the corresponding complexity of {\bf Algorithm~\ref{alg: hDP-TSP}} is $O^*(1.727391\dots^n)$. This would improve on the best known classical complexity of $O(n^2 2^n)$ given by {\bf Algorithm~\ref{alg: cDP-TSP}}. We prove, however, that this complexity is unattainable because it does not account for the entire solution space.

\begin{theorem} \label{thm1}
    The number of solutions that the oracle of the quantum part in {\bf Algorithm~\ref{alg: hDP-TSP}} must traverse is
    \begin{equation}
        N = O^*\left(\binom{n}{n/2}\binom{n/2}{n/4}^2\binom{n/4}{\alpha n/4}^4\right),
        \label{equ: N_corrected}
    \end{equation}
    not the $O^*\left(\binom{n}{n/2}\binom{n/2}{n/4}\binom{n/4}{\alpha n/4}\right)$ counted by the authors of~\cite{ambainis2019quantum}.
    Under Grover's algorithm~\cite{grover1996fast}, quantum minimum finding requires $O(\sqrt{N})$ queries, yielding the correct query complexity
    \begin{equation}
        O(\sqrt{N}) = O^*\left(2^n\binom{n/4}{\alpha n/4}^2\right)=O^*(2^{n(1+H(\alpha)/2)}).
        \label{equ: qc_corrected}
    \end{equation}
    Since $H(\alpha)>0$, the corrected quantum part satisfies $O(\sqrt{N})> \Omega^*(2^n)$ for any valid $\alpha\in(0,1)$.
\end{theorem}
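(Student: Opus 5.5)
I will prove Theorem~\ref{thm1} by making explicit what a single coherent oracle must enumerate when the three nested quantum minimum findings of Algorithm~\ref{alg: hDP-TSP} are collapsed into one search. The key observation is that the quantum part cannot store the intermediate values $f(S,u,v)$ for $|S|=n/2$ or $|S|=n/4$. The outer minimization (\ref{equ: ambainis3}) needs, for its chosen $S$, the values of \emph{both} halves $S$ and $(V\setminus S)\cup\{u,v\}$. Each of these is itself a minimization (\ref{equ: ambainis2}) over an independent choice of a subset $X$ of size $n/4$. Each of the four resulting quarters is in turn a minimization over an independent subset of size $\alpha n/4$, after which the eight leaves are read from classical memory.

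The plan therefore has three steps.

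\textbf{Step 1 (bijection).} Show that a candidate evaluated by the merged oracle is exactly one branch choice at every internal node of the tree in Figure~\ref{fig: hDP_tree}: one choice at level $0$, two independent choices at level $1$, and four independent choices at level $2$, together with polynomially many endpoint labels $u,v,t$. I will argue that different tuples yield different marked candidates, since they correspond to distinct eight-block concatenated paths. Conversely, every minimizer of $f(V)$ arises this way, and the $\min$ of a sum over independent subproblems equals the $\min$ over the product set. This identifies $N$ with the number of such tuples.

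\textbf{Step 2 (counting).} Multiply the choices, up to polynomial factors from the labels: $\binom{n}{n/2}$ at the root, $\binom{n/2}{n/4}$ for each of the two halves, and $\binom{n/4}{\alpha n/4}$ for each of the four quarters. This gives (\ref{equ: N_corrected}). Equivalently, $N$ counts ordered partitions of $V$ into four blocks of size $(1-\alpha)n/4$ and four of size $\alpha n/4$, and this multinomial form serves as a sanity check.

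\textbf{Step 3 (asymptotics).} Take the square root and apply Stirling, or the tightness of (\ref{equ: entropy-approx}) up to polynomial factors. Since $\binom{n}{n/2}=\Theta^*(2^n)$ and $\binom{n/2}{n/4}^2=\Theta^*(2^n)$, we get $\sqrt{N}=\Theta^*(2^{n}\cdot 2^{nH(\alpha)/2})$, using $\binom{n/4}{\alpha n/4}^4=\Theta^*(2^{nH(\alpha)})$. Then $H(\alpha)>0$ for $\alpha\in(0,1)$ gives the lower bound $\Omega^*(2^n)$; the two-sided Stirling estimate is needed here, not just the upper bound in (\ref{equ: entropy-approx}). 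Finally, I will contrast this with the count in (\ref{equ: com2}), which multiplies only one branch per level, that is, the leaves along a single root-to-leaf path.

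The main obstacle is Step 1: justifying rigorously that the sibling subproblems cannot be shared or reused, so that the oracle's search space is the product and not the path count. I would first argue that the quantum procedure has no memory of previously computed values for subsets of size $n/2$ or $n/4$; these are not precomputed classically. Any nested quantum minimum finding unrolled into one Grover iteration must therefore range over the joint choices. I would also note that, even counted as nested calls, the costs multiply across siblings, giving the same $\sqrt{N}$.
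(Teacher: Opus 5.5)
\textbf{Steps 2 and 3.} These are the paper's own argument. Its proof multiplies $\binom{n}{n/2}$ at the root, $\binom{n/2}{n/4}^2$ for the red and blue nodes at level $1$, and $\binom{n/4}{\alpha n/4}^4$ at level $2$. It then discards the endpoint labels as polynomial and takes the square root. You also note that the $\Omega^*(2^n)$ conclusion needs the lower estimate $\binom{m}{\beta m}\ge 2^{mH(\beta)}/(m+1)$, not only the upper bound (\ref{equ: entropy-approx}) that the paper cites. That is a correct refinement.

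\textbf{The gap is in Step 1.} It sits exactly where you locate the main obstacle. The claim that the oracle \emph{must} range over the product set cannot be proved. The sentence you offer in support, that ``even counted as nested calls, the costs multiply across siblings,'' is false.

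\textbf{Why nested calls add, not multiply.} In a nested implementation, the outer oracle on a basis state $\ket{S,u,v}$ first runs the level-$1$ minimum finding for $f(S,u,v)$. It then separately runs the one for $f((V\setminus S)\cup\{u,v\},v,u)$, adds the two results, and uncomputes. Siblings are evaluated one after another, so their costs \emph{add}. Let $T_\ell$ be the cost of computing $f$ at a level-$\ell$ node of Figure~\ref{fig: hDP_tree}. Then:
\begin{itemize}
\item $T_2=O^*\bigl(\sqrt{\binom{n/4}{\alpha n/4}}\bigr)$, since the leaves are classical table lookups;
\item $T_1=O^*\bigl(\sqrt{\binom{n/2}{n/4}}\cdot 2T_2\bigr)$;
\item $T_0=O^*\bigl(\sqrt{\binom{n}{n/2}}\cdot 2T_1\bigr)$.
\end{itemize}
Boosting and uncomputing the bounded-error inner calls costs only polynomial factors at this constant recursion depth. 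So $T_0$ matches the single-path bound (\ref{equ: com2}).

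\textbf{Your own identity cuts the other way.} The identity $\min_{X,Y}[g(X)+h(Y)]=\min_X g(X)+\min_Y h(Y)$ shows that the product search is \emph{sufficient}. It is also precisely what makes it \emph{unnecessary}. The absence of stored values at sizes $n/2$ and $n/4$ does not matter, because they are recomputed coherently inside each outer oracle call at additive cost.

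\textbf{What can actually be derived.} (\ref{equ: N_corrected}) and (\ref{equ: qc_corrected}) hold only as the size of, and Grover cost for, one particular search space. That space is a single flat search whose oracle marks complete eight-block concatenations. The paper's proof does not close this gap either; it simply stipulates that model. To make your argument sound, state the flat-oracle model as an explicit hypothesis and delete the nested-calls claim. Without that hypothesis, neither the ``must traverse'' assertion nor the $\Omega^*(2^n)$ lower bound for Algorithm~\ref{alg: hDP-TSP} follows.
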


\begin{proof}
    In quantum minimum finding~\cite{durr1996quantum, boyer1998tight}, the oracle is required to mark solutions with lengths less than the current threshold, and the query complexity is $O(\sqrt{N})$ where $N$ is the number of feasible solutions. {\bf Figure~\ref{fig: hDP_tree}} shows the recursive tree of {\bf Algorithm~\ref{alg: hDP-TSP}}. The number of its leaf nodes is $\binom{n}{n/2}\binom{n/2}{n/4}\binom{n/4}{\alpha n/4}$, which is the number that the authors of~\cite{ambainis2019quantum} used to take the square root.

    In {\bf Algorithm~\ref{alg: hDP-TSP}}, step 2 must be executed coherently as an integrated quantum search, which requires that the oracle of quantum minimum finding can concatenate the component paths to form complete solutions according to the recursive tree before marking the feasible solutions: 
    
    We select a subset of size $n/2$ from the entire set (the green root node at level $0$ in {\bf Figure~\ref{fig: hDP_tree}}), and there are $\binom{n}{n/2}$ ways to do so. Let the red node at level $1$ represent the selected subset, and its complement set be the blue node. Then, we select subsets of size $n/4$ (represented by two red nodes at level 2) from these two subsets (represented by the red and blue nodes at level $1$) respectively, and there are $\binom{n/2}{n/4}^2$ ways to do so. The authors of~\cite{ambainis2019quantum} only considered the number of ways of selecting from one subset (represented by the red node at level $1$) and omitted the blue node. Similarly, we select subsets of size $\alpha n/4$ (represented by four red nodes at level 3) from four subsets at level $2$, and there are $\binom{n/4}{\alpha n/4}^4$ ways to do so. Now we can construct a solution by concatenating the shortest paths of eight subsets at level $3$. (We ignore the selection of the origin and end of the shortest path because this only increases a polynomial factor.) Therefore, there are $O^*(\binom{n}{n/2}\binom{n/2}{n/4}^2\binom{n/4}{\alpha n/4}^4)$ solutions in total and each solution corresponds to a partition of the entire set into $8$ subsets with sizes of $\alpha n/4$ and $(1-\alpha)n/4$. Taking the square root and applying (\ref{equ: entropy-approx}) yield the query complexity $O^*(2^n\binom{n/4}{\alpha n/4}^2)=O^*(2^{n(1+H(\alpha)/2)})$. Since $H(\alpha)>0$, the corrected query complexity $O^*(2^{n(1+H(\alpha)/2)})$ cannot fall below $\Omega^*(2^n)$ for any valid $\alpha\in(0,1)$, confirming that the algorithm in~\cite{ambainis2019quantum} never surpasses Held-Karp.
\end{proof}
The quantum advantage of Grover's search framework requires that the step 2 in {\bf Algorithm~\ref{alg: hDP-TSP}} must be executed coherently as an integrated search process. The oracle must therefore concatenate all partial paths into complete solutions to mark them correctly. This requirement motivates the quantum divide-and-conquer algorithm that we now introduce.

\subsection{Quantum divide-and-conquer} \label{secIIB}
Let $A=(\omega_{ij})_{n\times n}\in\mathbf{R}^{n\times n}$ be the weighted adjacency matrix of the complete undirected graph $G=(S,\ E)$. Divide the $n$-vertex set $S$ into $k$ disjoint subsets $S_i$ of sizes $m_1,\ m_2,\ ...,\ m_k$ respectively to get a $k$-partition, where $\sum_{i=1}^km_i=n$. Fix arbitrary two vertices of each subset as origin and end, denoted by $u_i,\ v_i$.

\begin{lem} \label{thm2}
    For each $k$-partition, let $L_i$ be the shortest path from $u_i$ to $v_i$ that visits every vertex in $S_i$ exactly once. Concatenating these paths and adding edges $(v_1,u_2),\ (v_2,u_3),\ ...,\ (v_{k-1},u_k),\ (v_k,u_1)$ yield a cycle corresponding to this $k$-partition. Let $P$ be the set of all such cycles formed by all $k$-partitions with sizes $m_1,\ m_2,\ ...,\ m_k$ and all possible selections of $u_i,\ v_i$ for each $S_i$. Then, the shortest Hamiltonian cycle belongs to $P$.
\end{lem}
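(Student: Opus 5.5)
Let $C^*=(x_1,x_2,\dots,x_n,x_1)$ be a shortest Hamiltonian cycle of $G$. The plan is to cut $C^*$ into $k$ consecutive arcs of the prescribed sizes and show that the cycle in $P$ built from the resulting partition has length at most that of $C^*$. Since every element of $P$ is itself a Hamiltonian cycle, this forces equality, and we may take $C^*$ (or an equally short tour) to lie in $P$.

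\textbf{Step 1 (cutting).} Set $M_0=0$ and $M_i=\sum_{j\le i}m_j$. Define
\[
S_i=\{x_{M_{i-1}+1},\dots,x_{M_i}\}.
\]
These sets are disjoint, have sizes $m_1,\dots,m_k$, and cover $V$. Choose $u_i=x_{M_{i-1}+1}$ and $v_i=x_{M_i}$. When $m_i=1$ we have $u_i=v_i$, and the degenerate single-vertex path is allowed, matching $f(\{v\},v,v)=0$ in (\ref{equ: ambainis1}). This pair $(\{S_i\},\{u_i,v_i\})$ is one of the selections ranged over in the definition of $P$.

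\textbf{Step 2 (comparison).} Write $\omega(\cdot)$ for total edge weight. The arc $A_i$ of $C^*$ from $u_i$ to $v_i$ visits every vertex of $S_i$ exactly once, so $\omega(L_i)\le\omega(A_i)$ by minimality of $L_i$. The edges joining consecutive arcs of $C^*$ are exactly $(x_{M_i},x_{M_i+1})=(v_i,u_{i+1})$ for $i<k$, together with $(x_n,x_1)=(v_k,u_1)$. These are precisely the connecting edges added in the lemma. Hence the constructed cycle $C$ satisfies
\[
\omega(C)=\sum_i\omega(L_i)+\sum_i\omega(v_i,u_{i+1})\le\omega(C^*).
\]
Because $C$ is a Hamiltonian cycle and $C^*$ is optimal, equality holds, so $C$ is a shortest Hamiltonian cycle and belongs to $P$.

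\textbf{Main subtlety.} The only delicate point is interpretive: when several shortest paths $L_i$ tie, $C$ may differ from $C^*$ as an edge set. The statement should therefore be read as ``some shortest Hamiltonian cycle belongs to $P$,'' or equivalently that the minimum length over $P$ equals the optimal tour length. I would also note that the subsets are labeled by the cyclic order, which is exactly what the sequential selection of $S_1,\dots,S_k$ with sizes $m_1,\dots,m_k$ allows.
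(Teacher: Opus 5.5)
Your proof is correct and is essentially the paper's argument. Both cut the optimal tour into consecutive arcs of sizes $m_1,\dots,m_k$, take each arc's endpoints as $u_i,v_i$, and use minimality of $L_i$; the paper phrases this as a contradiction via path replacement, while you write it as a direct inequality $\omega(C)\le\omega(C^*)$. Your remark on ties (that what is really shown is that some optimal tour, equivalently the optimal length, is attained in $P$) correctly sharpens the paper's slightly loose ``must be the shortest path'' step.
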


\begin{proof}
    \begin{figure}[H]
        \centering
        \includegraphics[width=.8\linewidth]{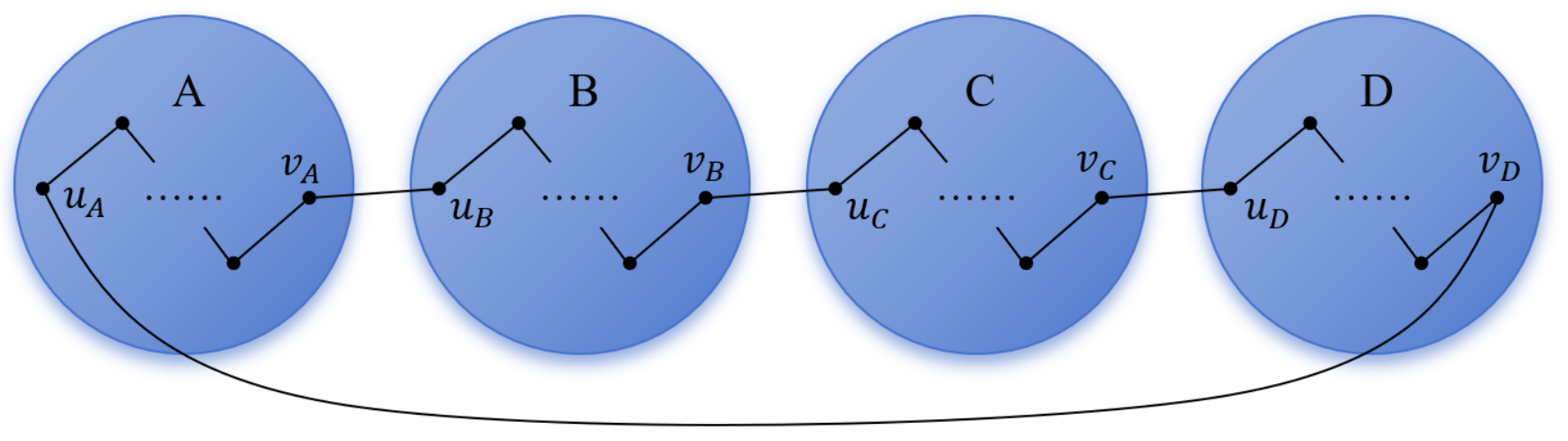}

        \caption{Example of $4$-partition.}
        \label{fig: TSP-4-partition}
    \end{figure}
    {\bf Figure~\ref{fig: TSP-4-partition}} shows an example of $4$-partition. Let the shortest Hamiltonian cycle be $w_1\to w_2\to\ ...\ \to w_n \to w_1$. We can start from $w_1$ and select $m_1,\ ...,\ m_k$ vertices sequentially along the circle to form subsets $S_1,\ ...,\ S_k$. Assume the origin and end of the cycle in $S_i$ are $u_i$ and $v_i$ respectively. Then, the part of the cycle in $S_i$ must be the shortest path from $u_i$ to $v_i$ that passes through all vertices exactly once in $S_i$. Otherwise, we can replace this path with the shortest path to obtain a shorter Hamiltonian cycle, leading to a contradiction. Therefore, the shortest Hamiltonian cycle belongs to $P$.
\end{proof}
{\bf Lemma~\ref{thm2}} suggests that we can first compute all shortest paths of subsets by classical dynamic programming and then search for the shortest Hamiltonian cycle over the components of the uniform superposition state encoding all elements of $P$. Assume $m_1\geq m_2\geq\ ...\geq m_k$ w.l.o.g., then it suffices to apply dynamic programming to all subsets with size no more than $m_1$.

\begin{algorithm}[H]
\begin{algorithmic}[1]
\item[\ \ \ \ \textbf{Input:}] The weighted adjacency matrix $A=(\omega_{ij})_{n\times n}$. Let $n>m_1\geq m_2\geq\ ...\geq m_k>0$.
\item[\ \ \ \ \textbf{Output:}] The shortest Hamiltonian cycle and its cost.
\STATE Calculate $f(S_i,\ u_i,\ v_i)$ for all $|S_i|\leq m_1$ classically using (\ref{equ: ambainis1}).
\STATE Prepare the uniform superposition state $\ket{\psi}$ encoding all elements of $P$, which we call the set partition state.
\STATE Run quantum minimum finding over the components of the superposition state $\ket{\psi}$ to get the shortest Hamiltonian cycle, where the oracle can query classical data about $f(S_i,\ u_i,\ v_i)$ for $|S_i|\leq m_1$.
\end{algorithmic}
\caption{The quantum divide-and-conquer algorithm for TSP.}\label{alg: hdc-TSP}
\end{algorithm}

{\bf Algorithm~\ref{alg: hDP-TSP}} corresponds to a special case of {\bf Algorithm~\ref{alg: hdc-TSP}} when $k=8$ and $m_1=\ ...\ =m_4=(1-\alpha)n/4,\ m_5=\ ...\ =m_8=\alpha n/4$. The query complexity of the classical part of {\bf Algorithm~\ref{alg: hdc-TSP}} is 
\begin{equation}
    O^*\left(\sum_{i=1}^{m_1}\binom{n}{i}\right)=O^*(2^{nH(\frac{m_1}{n})}).
    \label{equ: com3}
\end{equation}
The query complexity of the quantum part is 
\begin{equation}
    O^*\left(\sqrt{\binom{n}{m_1}\binom{n-m_1}{m_2}\dots\binom{n-\sum_{i=1}^{k-2}m_i}{m_{k-1}}}\right)=O^*\left(\sqrt{\frac{n!}{\prod_{i=1}^km_i!}}\right).
    \label{equ: com4}
\end{equation}

\subsection{Optimal query complexity} \label{secIIC}
Let $m_i=\alpha_in,\ 0<\alpha_i<1,\ 1\leq i\leq k-1$ and $m_k=(1-\sum_{i=1}^{k-1}\alpha_i)n$. Then the query complexity of the classical part of {\bf Algorithm~\ref{alg: hdc-TSP}} is 
\begin{equation}
    O^*(2^{nH(\alpha_1)}).
    \label{equ: com5}
\end{equation}
The query complexity of the quantum part is 
\begin{align}
    &O^*\left(\sqrt{\binom{n}{\alpha_1n}\binom{(1-\alpha_1)n}{\alpha_2n}\dots\binom{(1-\sum_{i=1}^{k-2}\alpha_i)n}{\alpha_{k-1}n}}\right)\notag\\
    =& O^*\left(2^{\frac{n}{2}(H(\alpha_1)+(1-\alpha_1)H(\frac{\alpha_2}{1-\alpha_1})+\ \dots\ +(1-\sum_{i=1}^{k-2}\alpha_i)H(\frac{\alpha_{k-1}}{1-\sum_{i=1}^{k-2}\alpha_i}))}\right).
    \label{equ: com6}
\end{align}

Let $f(\alpha_1)=H(\alpha_1)$ and $g_k(\vec{\alpha})=\frac{1}{2}(H(\alpha_1)+(1-\alpha_1)H(\frac{\alpha_2}{1-\alpha_1})+\ \dots\ +(1-\sum_{i=1}^{k-2}\alpha_i)H(\frac{\alpha_{k-1}}{1-\sum_{i=1}^{k-2}\alpha_i}))$ denote the exponential parts of the classical and quantum complexities, respectively.

\begin{theorem} \label{thm3}
    The optimal query complexity of {\bf Algorithm~\ref{alg: hdc-TSP}} is $O^*(1.865666...^n)$ and this optimal value is obtained when $k=4$ and $\alpha_1=\alpha_2=\alpha_3=0.315742...$
\end{theorem}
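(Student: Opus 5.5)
The plan is to reduce the optimization over $k$ and $\vec\alpha$ to a one-parameter problem in $\alpha_1$, and then balance the classical and quantum exponents. We want to minimize $\max\{f(\alpha_1),g_k(\vec\alpha)\}$ over all $k$ and all $\vec\alpha$ with $\alpha_1\ge\alpha_2\ge\dots\ge\alpha_k>0$ and $\sum_i\alpha_i=1$.

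\textbf{Step 1: rewrite the quantum exponent.} By the chain rule for entropy, $g_k(\vec\alpha)$ is half the Shannon entropy of the whole distribution:
\begin{equation*}
g_k(\vec\alpha)=\tfrac12\,\mathcal{H}(\alpha_1,\dots,\alpha_k),\qquad \mathcal{H}(\vec p):=-\sum_i p_i\log_2 p_i .
\end{equation*}
This is consistent with the multinomial form $\sqrt{n!/\prod_i m_i!}$ in (\ref{equ: com4}).

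\textbf{Step 2: choose the best partition for fixed $\alpha_1$.} Fix $\alpha_1$. Then $f$ is fixed, and we must minimize $\mathcal{H}$ over probability vectors whose entries are all at most $\alpha_1$. Since $\mathcal{H}$ is Schur-concave, we want the vector that majorizes all others under this cap. That vector is $(\alpha_1,\dots,\alpha_1,r)$ with $q=\lfloor 1/\alpha_1\rfloor$ copies of $\alpha_1$ and remainder $r=1-q\alpha_1$. So for each $\alpha_1$ the optimal number of subsets is $k=q+1$ (or $k=q$ if $r=0$), and all parts except the last should equal $\alpha_1$. This already explains why the optimum has $\alpha_1=\alpha_2=\alpha_3$.

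\textbf{Step 3: rule out all ranges of $\alpha_1$ except $[1/4,1/3)$.}
\begin{itemize}
\item If $\alpha_1\ge 1/2$, the classical part enumerates $\binom{n}{\le m_1}\ge 2^{n-1}$ subsets. (The bound (\ref{equ: entropy-approx}) holds only for $k\le n/2$, so we use this direct estimate instead.) Hence there is no gain, and this case covers $k\le 2$.
\item If $\alpha_1<1/4$, every $p_i\le\alpha_1$ gives $\mathcal{H}\ge\log_2(1/\alpha_1)>2$ by the min-entropy bound. So $g_k>1$ and the quantum part alone exceeds $2^n$.
\item If $\alpha_1\in[1/3,1/2)$, we are in the $k=3$ configuration $(a,a,1-2a)$. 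Here $f$ is increasing and $g$ is decreasing, so the min-max is attained at the crossing. We solve that crossing numerically, expecting exponent base about $1.890$; this is the $k=3$ claim.
\end{itemize}

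\textbf{Step 4: the case $\alpha_1=a\in[1/4,1/3)$, i.e.\ $k=4$.} We minimize $\max\{H(a),G(a)\}$ where
\begin{equation*}
G(a)=\tfrac12\bigl(-3a\log_2 a-(1-3a)\log_2(1-3a)\bigr).
\end{equation*}
On this interval $H$ is strictly increasing, since $a<1/2$. Also
\begin{equation*}
G'(a)=\tfrac32\log_2\frac{1-3a}{a}<0,
\end{equation*}
because $1-3a<a$ when $a>1/4$. Checking endpoints, $G(1/4)=1>H(1/4)$ and $G(1/3)=\tfrac12\log_2 3<H(1/3)$. Hence there is a unique crossing $a^*$, and the min-max equals $H(a^*)$ there. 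Solving $H(a)=G(a)$ numerically should give $a^*\approx0.315742$ and $2^{H(a^*)}\approx1.865666$. Comparing with the $k=3$ value $1.890$ and with the excluded ranges then establishes global optimality.

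\textbf{Main obstacle.} The main difficulty is making Step 2 and Step 3 airtight. The majorization argument must respect the ordering constraint and the fact that $k$ is free. The case $\alpha_1\ge1/2$ must be treated with the correct classical cost rather than the entropy formula. Finally, the monotonicity claims must hold on each whole interval, so that the crossing point is a true global min-max. The numerical root-finding itself is routine.
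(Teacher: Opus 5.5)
Your argument is correct and reaches the same one-parameter problem as the paper, but by a different and cleaner route. The paper reduces the problem with calculus. It computes $\nabla g_k$ and argues that $g_k$ decreases as any non-last variable grows. It then pushes variables up until $\sum_{i<k}\alpha_i\to1$, so that $g_k$ \emph{degenerates} to $g_{k-1}$, temporarily admitting zero parts. Finally it excludes $\lceil 1/\alpha_1\rceil\ge5$ via $g_k^*(\tfrac{1}{k-1})=\tfrac12\log_2(k-1)\ge1$.

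You replace this iterative process with one global statement. For fixed $\alpha_1$, the capped vector $(\alpha_1,\dots,\alpha_1,1-q\alpha_1)$ majorizes every feasible vector (zero-padded to a common length). Schur-concavity then gives the minimum of $g$ over all $k$ simultaneously. Your min-entropy bound $\mathcal{H}\ge\log_2(1/\alpha_1)$ then disposes of $\alpha_1<1/4$ in one line, uniformly in $k$. This avoids tracking the ordering constraint along a trajectory, which the paper's process has to do.

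You also handle $\alpha_1\ge1/2$ properly with $\binom{n}{\le m_1}\ge2^{n-1}$. The paper checks only $\alpha_1=1/2$ for $k=2$ and otherwise just restricts to $\alpha_1<1/2$, which is where its entropy formula would be invalid. In exchange, the paper's route makes explicit the intuition that adding more subsets only degenerates to the case with fewer subsets.

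One sub-claim in your Step 3 is wrong, though it does not affect the theorem: on $[1/3,1/2)$ there is no crossing of $f$ and $g_3$. Throughout that interval, $g_3\le g_3(1/3)=\tfrac12\log_2 3\approx0.792$ while $f\ge H(1/3)\approx0.918$. So $\max\{f,g_3\}=f$, and the min-max sits at the endpoint $\alpha_1=1/3$, giving $2^{H(1/3)}\approx1.889881$. A numerical root-finder would return nothing on that interval. Replace the crossing sentence with $\max\{f,g\}\ge f(\alpha_1)\ge f(1/3)>0.8997$, which is exactly what the paper uses. The comparison with the $k=4$ value then goes through as you intended.
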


\begin{proof}
    First, we show that $g_k(\vec{\alpha})=-\frac{1}{2}[\sum_{i=1}^{k-1}\alpha_i\log_2\alpha_i+(1-\sum_{i=1}^{k-1}\alpha_i)\log_2(1-\sum_{i=1}^{k-1}\alpha_i)]$.
    When $k=2$, $g_2(\alpha_1)=\frac{1}{2}H(\alpha_1)=-\frac{1}{2}[\alpha_1\log_2\alpha_1+(1-\alpha_1)\log_2(1-\alpha_1)]$.
    Assume the proposition holds for $k=t-1$. Then, for $k=t$ we have:
    \begin{align}
        g_t(\vec{\alpha})=&g_{t-1}(\vec{\alpha}) + \frac{1}{2}(1-\sum_{i=1}^{t-2}\alpha_i)H(\frac{\alpha_{t-1}}{1-\sum_{i=1}^{t-2}\alpha_i})\notag\\
        =&-\frac{1}{2}[\sum_{i=1}^{t-2}\alpha_i\log_2\alpha_i+(1-\sum_{i=1}^{t-2}\alpha_i)\log_2(1-\sum_{i=1}^{t-2}\alpha_i)]\notag\\
        &-\frac{1}{2}(1-\sum_{i=1}^{t-2}\alpha_i)(\frac{\alpha_{t-1}}{1-\sum_{i=1}^{t-2}\alpha_i}\log_2\frac{\alpha_{t-1}}{1-\sum_{i=1}^{t-2}\alpha_i}+\frac{1-\sum_{i=1}^{t-1}\alpha_i}{1-\sum_{i=1}^{t-2}\alpha_i}\log_2\frac{1-\sum_{i=1}^{t-1}\alpha_i}{1-\sum_{i=1}^{t-2}\alpha_i}) \notag\\
        =&-\frac{1}{2}[\sum_{i=1}^{t-1}\alpha_i\log_2\alpha_i+(1-\sum_{i=1}^{t-1}\alpha_i)\log_2(1-\sum_{i=1}^{t-1}\alpha_i)].
        \label{equ: g_k}
    \end{align}
    Hence, the proposition holds for any $k\geq2$ by induction.

    Next, we analyze the monotonicity of the functions $f(\alpha_1)$ and $g_k(\vec{\alpha})$. 
    If $\alpha_1=\frac{1}{2}$, we have $f(\frac{1}{2})=H(\frac{1}{2})=1$, meaning that the complexity cannot be better than {\bf Algorithm~\ref{alg: cDP-TSP}} for $k=2$. For $k>2$, let $\alpha_1<\frac{1}{2}$, then $f(\alpha_1)$ is monotonically increasing. 
    By (\ref{equ: g_k}), the gradient of $g_k(\vec{\alpha})$ is 
    \begin{align}
        \nabla g_k(\vec{\alpha})=&[-\frac{1}{2\ln2}(\ln\alpha_j+1-\ln(1-\sum_{i=1}^{k-1}\alpha_i)-1)]_{j=1}^{k-1}\notag\\
        =&(-\frac{1}{2}\log_2\frac{\alpha_j}{1-\sum_{i=1}^{k-1}\alpha_i})_{j=1}^{k-1}.
        \label{equ: grad}
    \end{align}
    We have $\frac{1}{2}>\alpha_1\geq\ ...\ \geq\alpha_{k-1}\geq1-\sum_{i=1}^{k-1}\alpha_i>0$ for $n>m_1\geq m_2\geq\ ...\geq m_k>0$, deriving that $\frac{\alpha_j}{1-\sum_{i=1}^{k-1}\alpha_i}\geq1$ (and $\alpha_1\geq\frac{1}{k}$). Hence, $\nabla g_k(\vec{\alpha})\leq0$ and $g_k(\vec{\alpha})$ is monotonically decreasing in each of its variables. 
    The value of $g_k(\vec{\alpha})$ can be decreased by increasing the value of any variable, until $\sum_{i=1}^{k-1}\alpha_i\to1$. (This is always possible for $k\geq3$; for instance, we can take $\alpha_1,\alpha_2\to\frac{1}{2}$ for $k=3$.) In this limit, we have $\alpha_{k-1}=1-\sum_{i=1}^{k-2}\alpha_i$ and the value of $g_{k}$ becomes 
    \begin{equation}
        -\frac{1}{2}\sum_{i=1}^{k-1}\alpha_i\log_2\alpha_i=-\frac{1}{2}[\sum_{i=1}^{k-2}\alpha_i\log_2\alpha_i+(1-\sum_{i=1}^{k-2}\alpha_i)\log_2(1-\sum_{i=1}^{k-2}\alpha_i)],
        \label{equ: degrade}
    \end{equation}
    which degenerates to the case of $g_{k-1}$ by Eq.(\ref{equ: g_k}).

    The total query complexity is dominated by $\max\{f(\alpha_1),\ g_k(\vec{\alpha})\}$. For $\alpha_1\leq\frac{1}{k-1}$, we define  
    \begin{equation}
        g^*_k(\alpha_1)=g_k(\vec{\alpha})|_{\alpha_1=\ ...\ =\alpha_{k-1}}=-\frac{1}{2}[(k-1)\alpha_1\log_2\alpha_1+(1-(k-1)\alpha_1)\log_2(1-(k-1)\alpha_1)].
        \label{equ: g_opt}
    \end{equation}
    Then we have $g^*_k(\alpha_1)\ge g^*_k(\frac{1}{k-1})=\lim_{\alpha_1\to\frac{1}{k-1}}g^*_k(\alpha_1)=\frac{1}{2}\log_2(k-1)$ and 
    \begin{equation}
        g^*_k(\frac{1}{k-1})\geq1,\ \forall k\geq5. \label{equ: g_bound}
    \end{equation}
    We now show that it suffices to consider $k<5$ for the optimal total complexity by temporarily admitting $\alpha_i=0$. We fix $k$ and $\alpha_1$.
    \begin{enumerate}
        \item[(i)] If $\alpha_1<\frac{1}{k-1}$, the value of $g_k(\vec{\alpha})$ can be decreased by increasing the value of other variables, until $\alpha_1=\alpha_2=\ ...\ =\alpha_{k-1}$ (because the variables have been sorted in descending order and $\sum_{i=1}^{k-1}\alpha_i=(k-1)\alpha_1<1$). Hence, we have $g_k(\vec{\alpha})\ge g_k^*(\alpha_1)> g^*_k(\frac{1}{k-1})$; and if $k\ge5$, the complexity cannot be better than classical {\bf Algorithm~\ref{alg: cDP-TSP}} by Eq.(\ref{equ: g_bound}).
        \item[(ii)] If $\alpha_1\ge\frac{1}{k-1}$, the value of $g_k(\vec{\alpha})$ can be decreased by increasing the value of other variables, until $\sum_{i=1}^{k-1}\alpha_i=1$ and the value of $g_k$ is equal to the value of $g_{k-1}$ by Eq.(\ref{equ: degrade}). This process can be continued until, at some $r$, $\alpha_1=\ ...\ =\alpha_{r-1},\ \sum_{i=1}^{r-1}\alpha_i\leq 1$ and the value of $g_k$ decreases to the value of $g_r$. Now we have $0\leq\alpha_r\leq\alpha_{r-1},\ \alpha_j=0$ for $r<j<k$ and 
        \begin{equation}
            \sum_{i=1}^{r-1}\alpha_i=(r-1)\alpha_1\leq 1=\sum_{i=1}^{r}\alpha_i<r\alpha_1. \label{equ: bound_con}
        \end{equation}
        Eq.(\ref{equ: bound_con}) implies $\alpha_1\leq\frac{1}{r-1}$ and $\frac{1}{\alpha_1}<r\leq\frac{1}{\alpha_1}+1$. Hence, $r=\lceil\frac{1}{\alpha_1}\rceil$ and $g_r^*(\alpha_1)\ge g_r^*(\frac{1}{r-1})$. Therefore, by Eq.(\ref{equ: g_bound}), any $\alpha_1$ that can outperform the classical {\bf Algorithm~\ref{alg: cDP-TSP}} must satisfy $\lceil\frac{1}{\alpha_1}\rceil<5$, i.e., $\alpha_1\ge\frac{1}{4}$. Moreover, in this case, since $f$ is fixed for a given $\alpha_1$, and the optimization of $g_k$ eventually drives its value down to that of $g_r$, it suffices to optimize $g_r$ where $r<5$.
    \end{enumerate} 

    Finally, we calculate the optimal query complexity. For $k=3$, we have $\alpha_1\ge\frac{1}{3}$. It is easy to see that $g_3$ attains its global maximum of $0.792481\dots$ at $\alpha_1=\alpha_2=\frac{1}{3}$, which is below the minimum value of $f$, i.e., $f(\frac{1}{3})=0.918295\dots$ The corresponding query complexity is $O^*(2^{0.918295\dots n})=O^*(1.889881\dots^n)$. For $k=4$, we have $\alpha_1\ge\frac{1}{4}$. If $\alpha_1<\frac{1}{3}$, we have $g_4(\vec{\alpha})\ge g_4^*(\alpha_1)$. Since $g_4^*$ is monotonically decreasing while $f$ is monotonically increasing, the optimal complexity is attained when $f(\alpha_1)=g_4^*(\alpha_1)$, where numerical evaluation gives $f(0.315742\ldots) = g_4^*(0.315742\ldots) = 0.899691\ldots$
    If $\alpha_1\ge\frac{1}{3}$, the complexity cannot be lower because $f(\alpha_1)\ge f(\frac{1}{3})$.

    Therefore, the optimal query complexity of {\bf Algorithm~\ref{alg: hdc-TSP}} is $O^*(2^{0.899691...n})=O^*(1.865666...^n)$.
\end{proof}

The trade-off behind the optimal $k$ can be understood as follows. The classical part precomputes the shortest paths within subsets of size at most $\alpha_1 n$, with cost $O^*(2^{nH(\alpha_1)})$. The quantum part searches over all $k$-partitions with labels of origins and ends, with cost $O^*(\sqrt{n!/(\prod_{i=1}^k m_i!)})$. As $k$ increases, each subset shrinks, making the classical pre-computation cheaper. However, the number of $k$-partitions grows rapidly with $k$, and the quantum search cost rises accordingly. At $k=2$, there is no speedup because the classical cost is $O^*(2^n)$. At $k=3$, the complexity is entirely dominated by the classical cost, yielding the optimal exponents $0.918295\dots$ At $k=4$, there is a balance point where the classical and quantum cost are equal, yielding the optimal exponents $0.899691\ldots$ Beyond $k=4$, the growth in the number of partitions quickly exceeds the quadratic Grover speedup, and the overall complexity cannot be better.

While $k=4$ is optimal, the numerical difference between $k=3$ ($1.889881\ldots^n$) and $k=4$ ($1.865666\ldots^n$) is modest, and $k=3$ requires only three subsets, which simplifies the algorithm. For small to moderate $n$, $k=3$ may be preferable in practice, and our Qiskit simulations ({\bf Section~\ref{secIV}}) use $k=3$.

\section{Implementation of the oracle} \label{secIII}
\subsection{Necessity of structured state preparation} \label{secIIIA}
We now show that structured state preparation is necessary to keep each oracle query within polynomial time. Without it, our TSP solver cannot achieve a genuine quantum advantage in total time complexity.

The oracle accesses the classically precomputed data to form complete solutions. A straightforward approach prepares a uniform superposition state $\sum_{i=0}^{|P|-1}\frac{1}{\sqrt{|P|}}\ket{i}$ via Hadamard gates, then uses a lookup table that maps each index $i$ to a set partition with labels. The oracle then queries the corresponding classical data about the length of the route. This approach requires a bijection from sequential indices to set partitions with labels, and the oracle must encode this bijection, which makes the oracle construction prohibitively difficult.
We avoid this difficulty by shifting the burden from oracle design to state preparation: we encode $P$ explicitly in a structured quantum state. We utilize the method for preparing Dicke states to directly prepare the uniform superposition state that encodes set partitions with labels, which we call the set partition state. Although this approach does not fully utilize the encoding space of quantum states, it avoids the need to design and encode a bijection in the oracle by using the structural nature of the set partition state.

The necessity of structured state preparation becomes even clearer when the oracle is implemented via quantum random access memory (QRAM). (We take the bucket-brigade QRAM as example. A brief introduction about the bucket-brigade QRAM can be found in \hyperref[app1]{\bf Appendix A}.)
With the trivial encoding, which prepares the state $\sum_{i=0}^{|P|-1}\frac{1}{\sqrt{|P|}}\ket{i}\ket{0}^{\otimes M}$ via $H^{\otimes}$ gates, the QRAM query process is as follows:

Our TSP solver first pre-calculates the shortest paths for subsets $S_i$ of size at most $\alpha_1n$ by classical dynamic programming. Then we classically calculate the lengths of Hamiltonian cycles corresponding to set partitions with labels by querying the precomputed shortest paths $f(S_i,\ u_i,\ v_i)$ and the weights of connecting edges $\omega_{vu}$, and subtract the threshold from these lengths to obtain the values of 

\begin{equation}
    f(S_A,\ 0,\ v_A)+\omega_{v_Au_B}+f(S_B,\ u_B,\ v_B)+\omega_{v_Bu_C}+f(S_C,\ u_C,\ v_C)+\omega_{v_Cu_D}+f(S_D,\ u_D,\ v_D)+\omega_{v_D0}-C_T.\notag
\end{equation}
We save these values as a classical array $Arr=\{(i,\ v_i):0\leq i\leq|P|-1\}$ according to two's complement form. The elements of this array are assigned to the leaf nodes of the QRAM tree, and the binary strings corresponding to the paths from the root to these leaf nodes encode the indices of the array. In this way, the oracle can access classical data via QRAM. Therefore, we can obtain the state $\sum_{i=0}^{|P|-1}\frac{1}{\sqrt{|P|}}\ket{i}\ket{v_i}$ from the state $\sum_{i=0}^{|P|-1}\frac{1}{\sqrt{|P|}}\ket{i}\ket{0}^{\otimes M}$ through the QRAM query. Next, the oracle of our TSP solver applies $Z$-gate on the sign qubit and then resets the value register to its initial state $\ket{0}^{\otimes M}$ using the inverse of the QRAM query operation.

Under the optimal parameters, we have $|P|=O^*(2^{2nH(\alpha_1)})$ because the classical query complexity $O^*(2^{nH(\alpha_1)})$ and the quantum query complexity $O(\sqrt{|P|})$ are equal. Creating the classical array $Arr=\{(i,\ v_i):0\leq i\leq|P|-1\}$ and assigning its elements to the leaf nodes of the QRAM tree therefore costs $O^*(2^{2nH(\alpha_1)})=O^*(1.865666\dots^{2n})=O^*(3.480709\dots^n)$, erasing the quantum advantage. We must compute values related to $P$ in a quantum manner, and can only query the classical precomputed DP table of length $O^*(2^{nH(\alpha_1)})$, which forces us to utilize the structured state to explicitly encode $P$.

Having established the necessity of structured state preparation, we now construct the required state. We build on the Dicke state preparation framework of~\cite{bartschi2019deterministic}.

\subsection{Structured quantum state} \label{secIIIB}
{\bf The Dicke state.} 
The Dicke state~\cite{dicke1954coherence}, defined as the uniform superposition of all $n$-qubit states $\ket{x}$ with Hamming weight $w(x)=k$, i.e., 
\begin{equation}
    \ket{D^n_k}=\sum_{x\in\{0,1\}^n,\ w(x)=k}\frac{1}{\sqrt{\binom{n}{k}}}\ket{x},
    \label{equ:dicke1}
\end{equation}
can be used to index the subsets of the same size. Bartschi et al.~\cite{bartschi2019deterministic} provided a deterministic method to efficiently prepare the Dicke state without any ancillary qubits, which requires $O(kn)$ quantum gates and a depth of $O(n)$. This method is based on the recursive formula
\begin{equation}
    \ket{D^n_k}=\sqrt{\frac{n-k}{n}}\ket{D^{n-1}_k}\ket{0}+\sqrt{\frac{k}{n}}\ket{D^{n-1}_{k-1}}\ket{1},
    \label{equ:dicke2}
\end{equation}
which starts from the initial state $\ket{0}^{\otimes(n-k)}\ket{1}^{\otimes k}$ and can be achieved by applying an appropriate $RY$-gate. 
The quantum circuits for preparing Dicke states consist of two basic structures, shown in {\bf Figure~\ref{fig: dicke-structure}}.
\begin{figure}[H]
    \centering
    \begin{subfigure}{0.45\textwidth}
        \centering
        \includegraphics[width=\textwidth]{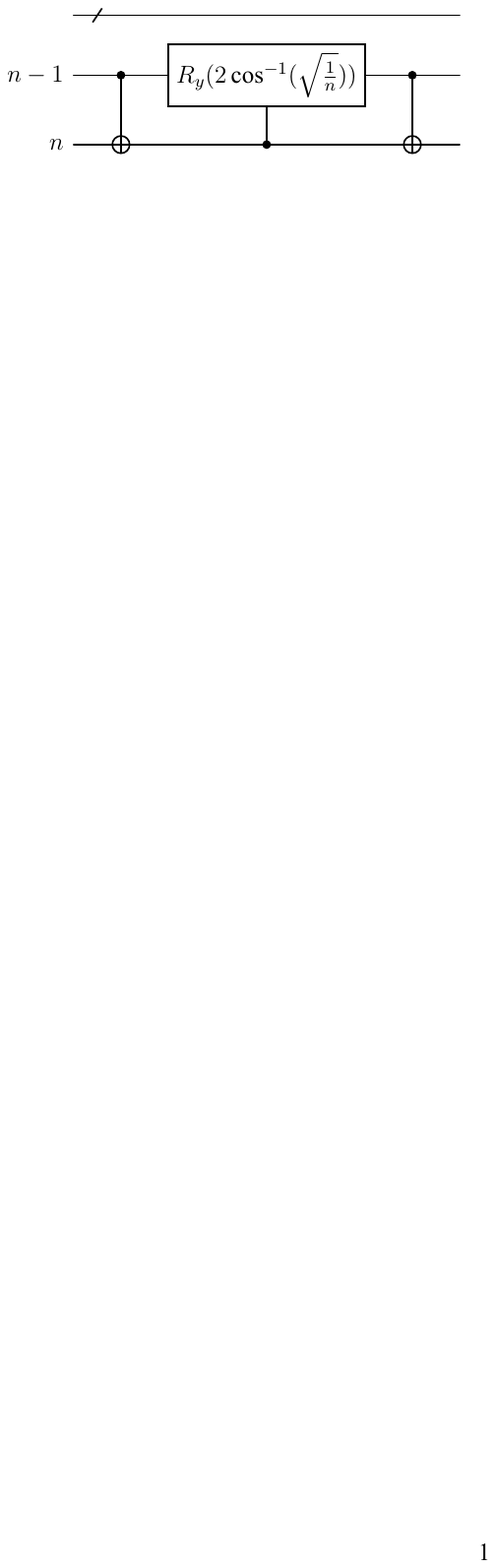}
        \caption{The first structure.}
        \label{fig: dicke-sub1}
    \end{subfigure}
    \hfill
    \begin{subfigure}{0.45\textwidth}
        \centering
        \includegraphics[width=\textwidth]{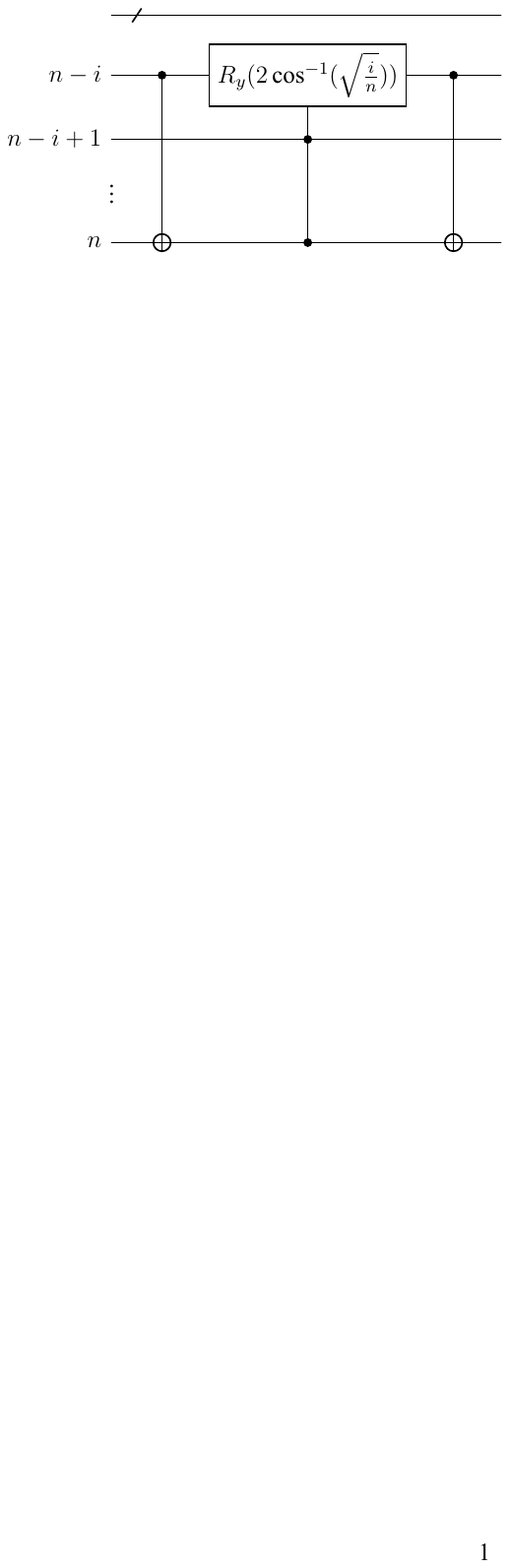}
        \caption{The second structure.}
        \label{fig: dicke-sub2}
    \end{subfigure}
    \caption{Basic structures of the quantum circuits for preparing Dicke states, which are used to realize the operation (\ref{equ:dicke2}).}
    \label{fig: dicke-structure}
\end{figure}

{\bf Figure~\ref{fig: Dicke4q}} provides a quantum circuit to prepare $\ket{D^4_2}$, in which the $RY(2\cos^{-1}\sqrt{\frac{i}{m}})$ gate can transform the state $\ket{0}$ into the state $\sqrt{\frac{i}{m}}\ket{0}+\sqrt{\frac{m-i}{m}}\ket{1}$ and there are three modules with $m=4,\ 3,\ 2$. In each module, the circuit scans upward from the current highest-order qubit (the $m$-th qubit counting from the top). Upon locating the first $0$, a $RY$-gate is applied to generate two branches. For the branch where the qubit becomes $\ket{1}$, the corresponding $\ket{1}$ at the current highest-order qubit is flipped.
As shown in {\bf Figure~\ref{fig: DickeTree}}, the three layers of the binary tree (from left to right) correspond to the actions of the three modules. The underlined digits indicate the corresponding qubits that will not be modified in subsequent operations, and the digit immediately preceding an underlined digit represents the current highest-order qubit. The leaf nodes of the tree represent $6$ components of $\ket{D^4_2}$.
\begin{figure}[H]
    \centering
    \includegraphics[width=.9\linewidth]{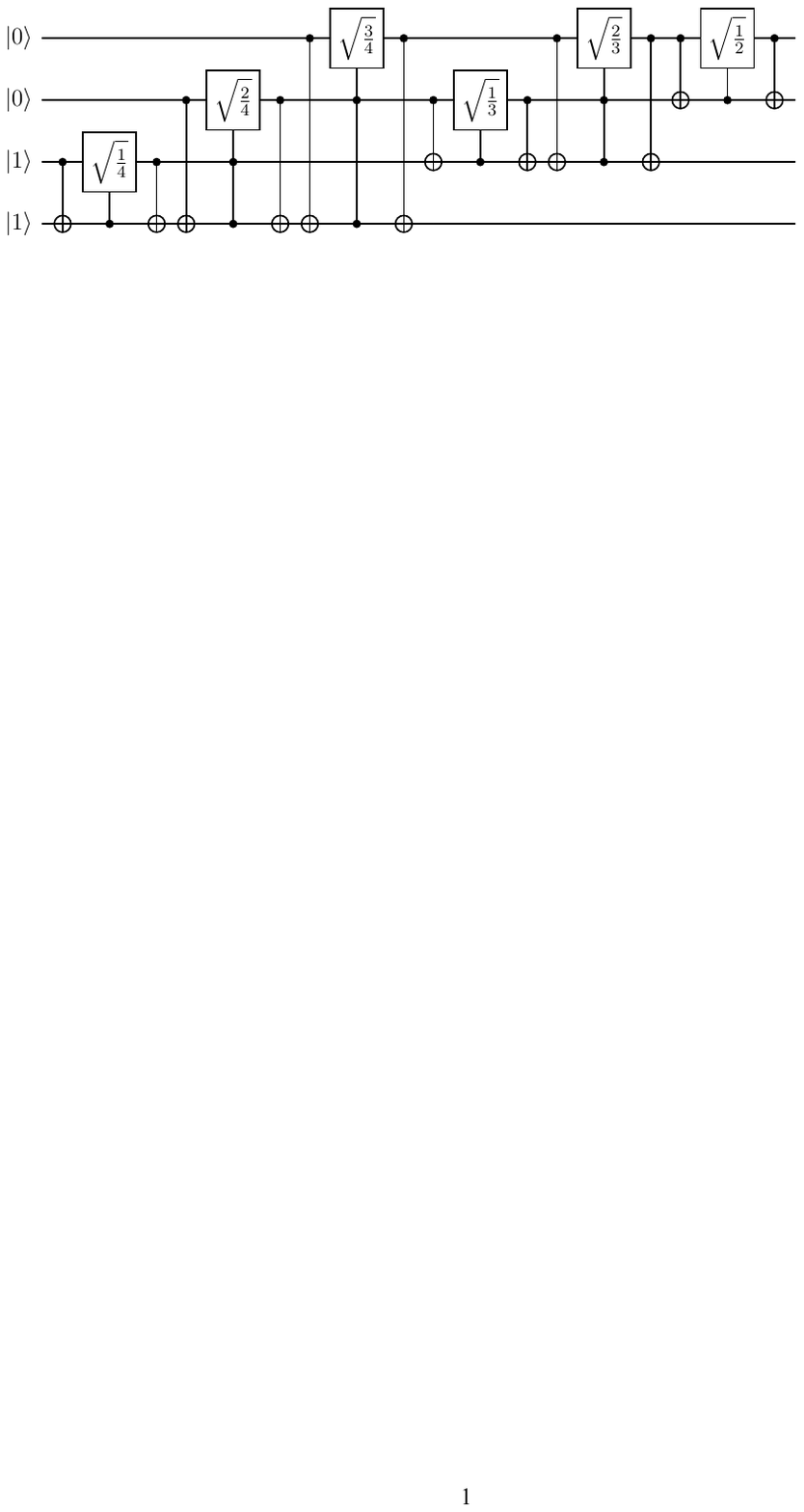}

    \caption{Quantum circuit for preparing Dicke state of $4$ qubits.}
    \label{fig: Dicke4q}
\end{figure}
 
\begin{figure}[H]
    \centering
    \includegraphics[width=.4\linewidth]{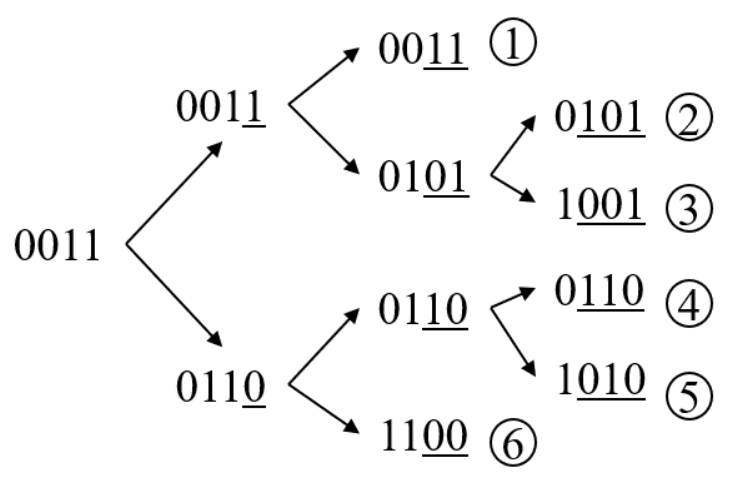}

    \caption{The binary tree produced in preparing $\ket{D^4_2}$.}
    \label{fig: DickeTree}
\end{figure}

The Dicke state can be used to design quantum algorithms for some combinatorial optimization problems, such as $k$-vertex cover problem~\cite{jiang2025dicke}. Using the binary tree perspective, we will leverage the Dicke state preparation as a building block to construct the set partition state.

{\bf The set partition state.} 
We can distinguish $4$ subsets with labels A, B, C, D and use states $00,\ 01,\ 10,\ 11$ to represent them respectively. To identify the origin $u$ and the end $v$ of the shortest path in a subset, we use four qubits to encode a vertex of the entire set $S$. A total of $4n$ qubits is required. Among the four qubits encoding each vertex, the first two qubits indicate which subset the vertex belongs to, and the last two qubits $\ket{uv}$ indicate the origin and end: If this vertex is the origin, then $u=1,\ v=0$. If this vertex is the end, then $u=0,\ v=1$. For other cases, $u=0,\ v=0$.
Based on this encoding method, our set partition state is 
\begin{equation}
    \sum_{p\in P}\frac{1}{\sqrt{|P|}}\ket{p}=\sum_{p\in P}\frac{1}{\sqrt{|P|}}\ket{**uv}^{\otimes n}, \label{equ: setPartitionState}
\end{equation}
where $P$ is the set of set partitions with labels, and $(**uv)$ represents the labels of subsets ($**$) and the origin ($u=1$) or end ($v=1$) for a single vertex. Every $p\in P$ corresponds to a Hamiltonian cycle. 

\begin{theorem} \label{thm: setPartitionState}
    The set partition state can be prepared with the gate complexity of $O(n^2)$ and the depth of $O(n)$ without ancillary qubits, and the algorithm is provided as {\bf Algorithm~\ref{alg: setPartition}}.
\end{theorem}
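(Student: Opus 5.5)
The plan is to build the set partition state in two stages, each made only of Dicke-state modules (\ref{equ:dicke2}) in the style of~\cite{bartschi2019deterministic}. Stage one writes the subset labels, producing a uniform superposition over all $k$-partitions with sizes $m_1,\dots,m_k$ (for $k=4$, labels A, B, C, D on the first two qubits of each vertex). Stage two, conditioned on the labels, picks an origin and an end inside each subset in uniform superposition on the $\ket{uv}$ qubits. Since $|P|$ factors as the number of labeled partitions times $\prod_i m_i(m_i-1)$, and the second factor does not depend on which partition was drawn, a uniform superposition in each stage (with branch-independent stage-two amplitudes) gives a uniform superposition over $P$.

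\textbf{Stage one.} Following the sequential choice in (\ref{equ: com4}), I would first prepare $\ket{D^n_{m_1}}$ on the $n$ "first label bits", say the high qubit of each vertex block. For $k=4$ it is cleaner to use two levels. The high label bit splits $S$ into $\{A,B\}$ versus $\{C,D\}$ via $\ket{D^n_{m_A+m_B}}$. The low label bit then needs a Dicke state of weight $m_B$ on the vertices whose high bit is $0$, and weight $m_D$ on those whose high bit is $1$.

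These positions are not contiguous, so I would run a controlled version of the Bärtschi--Eidenbenz scan. The scan walks over vertices in order and tracks how many positions of each class remain and how many ones are still to be placed. Each $RY(2\cos^{-1}\sqrt{i/m})$ gate becomes a gate controlled on the high bit. Equivalently, one can write a single scan that assigns labels vertex by vertex with amplitudes $\sqrt{(\text{remaining }m_j)/(\text{remaining }n)}$. This is the multinomial analogue of (\ref{equ:dicke2}) and yields the uniform superposition of $n!/\prod m_i!$ strings.

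The difficulty is that the remaining counts depend on the branch. The Dicke construction avoids this by encoding the counter in unary, in the position of the "first 0" from the initial state $\ket{0}^{\otimes(n-k)}\ket{1}^{\otimes k}$. I would copy that trick: start from a fixed labeled string with the blocks of A's, B's, C's and D's, and apply the split-and-cyclic-shift modules of Figure~\ref{fig: dicke-structure}. Each module touches $O(k)=O(1)$ neighbouring qubits per step, and there are $n$ steps of at most $n$ gates each, giving $O(n^2)$ gates. The modules pipeline as in~\cite{bartschi2019deterministic}, giving depth $O(n)$.

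\textbf{Stage two.} For each label $j$, the origin is a uniformly chosen member of $S_j$. This is a $W$-type state (a $\ket{D^{m_j}_1}$) on the $u$ qubits of the vertices carrying label $j$. The end is a $\ket{D^{m_j-1}_1}$ on the remaining members. Both are prepared by label-controlled scans with the same unary-counter trick, now of weight $1$, so $O(n)$ gates per label and $O(kn)$ overall.

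For the fixed start vertex $0$, its $u$ bit is simply set, or I would just note that the algorithm's $P$ labels all origins. The bookkeeping for the number of unmarked vertices of a label ahead in the scan is handled by controlled $RY$ gates whose angles are fixed in advance per step. Those angles must be independent of the branch, and the unary encoding is what ensures this.

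\textbf{Main obstacle.} The hardest step is making the conditional scans genuinely ancilla-free with branch-independent angles while the relevant positions are interleaved. I would first try ordering the work so that each stage acts on a permuted register in which, after stage one, the Dicke "first zero" structure still indexes remaining counts; if that fails, I would accept controlled-$RY$ gates with $O(1)$ controls, decomposed without ancillas, which still keeps $O(n^2)$ gates and $O(n)$ depth. Uniformity over $P$ then follows by multiplying the branch amplitudes, which telescope exactly as in the proof of (\ref{equ:dicke2}).
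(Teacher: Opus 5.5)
\textbf{There is a genuine gap: your stage two, as ordered, has no working ancilla-free construction.} It is the step you flag yourself, and neither fallback closes it.

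Once the labels are written first, the vertices carrying label $j$ occupy a branch-dependent, interleaved set of positions. Consider a left-to-right scan preparing $\ket{D^{m_j}_1}$ on their $u$ qubits, or the weight-$m_B$ Dicke state on the vertices whose high label bit is $0$. At vertex $i$ it needs a rotation angle fixed by how many label-$j$ vertices it has already passed, equivalently how many remain among positions $i,\dots,n$. That number is a global function of the whole label register, so controlled-$RY$ gates with $O(1)$ controls cannot select it. The B\"artschi--Eidenbenz unary counter is also unavailable, because it exists only when the relevant positions form a contiguous block of the form $\ket{0\cdots01\cdots1}$. Without an ancillary counter, which the theorem excludes, stage two is not constructed, and the same objection hits your two-level version of stage one.

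Your ``permuted register'' option fails in the order you chose: after stage one, no fixed permutation makes the label-$j$ positions contiguous in every branch. The single-scan multinomial route is also not a straightforward copy of the binary case. From a sorted prefix $A\cdots AB\cdots BC\cdots CD\cdots D$, placing label $j$ last shifts every block boundary after the $j$-block. Those boundaries sit at branch-dependent, non-neighbouring positions, so neither the claimed $O(1)$-neighbour locality nor the depth-$O(n)$ pipelining is justified.

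\textbf{The missing idea is to reverse the order, so that information is transported rather than recomputed.} The paper writes the fine-grained data while the relevant vertices are still contiguous. Starting from the sorted string, it prepares $\ket{D^{x}_2}$ on the third qubit of each block with the ordinary Dicke circuit. It then splits the two marked vertices into origin and end with the module of Figure~\ref{fig: SplitModule}.

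Only afterwards does it build the labels from the inside out: $D^{a+b}_b$ and $D^{c+d}_d$ on the second column, then $D^{n}_{c+d}$ on the first. In these three circuits, each partial swap on a label qubit is accompanied by controlled-SWAPs of the other qubits of the two vertices (Figure~\ref{fig: par-structure}). Every branch of the Dicke tree therefore acts as a permutation of whole $4$-qubit vertex blocks. The origin/end marks and lower-level labels are carried along bijectively, and the rotation angles stay the standard branch-independent ones. The $O(n^2)$ gate and $O(n)$ depth bounds then follow from~\cite{bartschi2019deterministic}.

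Within your own framework, an alternative would be to treat origin and end as extra label classes, with multiplicities $1,1,m_j-2$ per subset. That removes stage two entirely, but it needs a genuine qudit Dicke construction, which your proposal only asserts.
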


\begin{figure}[H]
    \centering
    \includegraphics[width=.5\linewidth]{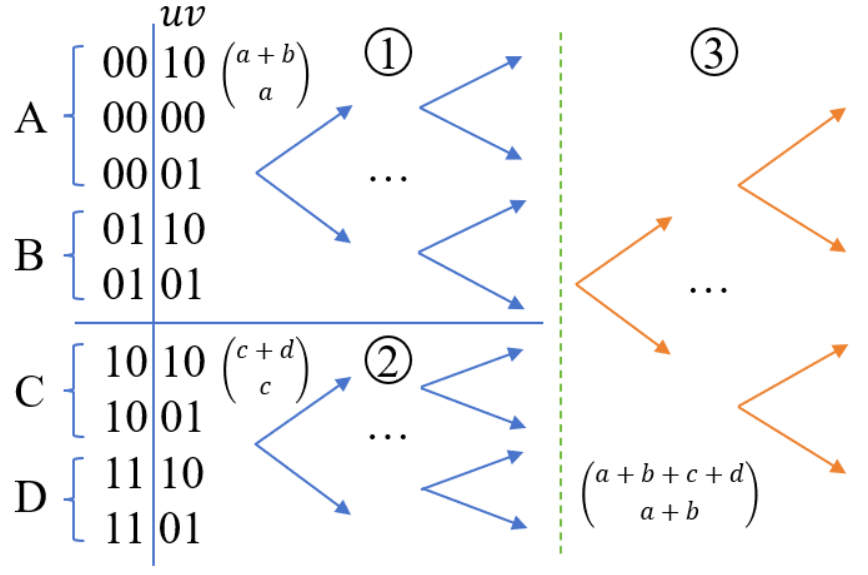}

    \caption{The binary tree produced in preparing a $4$-partition state with size of $3,\ 2,\ 2,\ 2$. A total of $36$ qubits is used, where $4$ qubits for each vertex, the first and second columns of qubits encode labels of subsets, and the third and fourth columns of qubits indicate the origin and end.}
    \label{fig: SetPartitionTree}
\end{figure}

\begin{algorithm}[H]
\begin{algorithmic}[1]
\item[\ \ \ \ \textbf{Input:}] Sizes of subsets $a,\ b,\ c,\ d$, where $a+b+c+d=n$. The initial state $\ket{0}^{\otimes4n}$.
\item[\ \ \ \ \textbf{Output:}] The uniform superposition state of set partitions with labels: $\sum_{p\in P}\frac{1}{\sqrt{|P|}}\ket{**uv}^{\otimes n}$, where every $4$-qubit encodes one vertex.
\STATE Apply $X$-gates to the first three columns of qubits to initialize them such that the first column of qubits is set to $\ket{0}^{\otimes(a+b)}\ket{1}^{\otimes(c+d)}$, the second column of qubits is set to $\ket{0}^{\otimes a}\ket{1}^{\otimes b}\ket{0}^{\otimes c}\ket{1}^{\otimes d}$, and the third column of qubits is set to $\ket{0}^{\otimes(a-2)}\ket{1}^{\otimes 2}\ket{0}^{\otimes(b-2)}\ket{1}^{\otimes 2}\ket{0}^{\otimes(c-2)}\ket{1}^{\otimes 2}\ket{0}^{\otimes(d-2)}\ket{1}^{\otimes 2}$.
\STATE Apply the Dicke state preparation algorithms of $D^a_2,\ D^b_2,\ D^c_2,\ D^d_2$ to the third column of qubits for $4$ subsets, respectively.
\STATE Apply the basic modules in {\bf Figure~\ref{fig: SplitModule}} to all possible $\binom{x}{2}$ two-qubit positions for $4$ subsets, respectively, where $x=a,\ b,\ c,\ d$.
\STATE Apply the Dicke state $D^{a+b}_b$ preparation algorithm with basic modified structures such as example in {\bf Figure~\ref{fig: par-structure}} to the first $4(a+b)$ qubits, where the $RY$-gates act on the second column of qubits.
\STATE Apply the Dicke state $D^{c+d}_d$ preparation algorithm with basic modified structures such as example in {\bf Figure~\ref{fig: par-structure}} to the last $4(c+d)$ qubits, where the $RY$-gates act on the second column of qubits.
\STATE Apply the Dicke state $D^{n}_{c+d}$ preparation algorithm with basic modified structures such as example in {\bf Figure~\ref{fig: par-structure}} to all $4n$ qubits, where the $RY$-gates act on the first column of qubits.
\end{algorithmic}
\caption{The set partition state preparation algorithm for $k=4$. We illustrate each step according to the column configuration of vertices ($4$ columns of qubits), as shown in {\bf Figure~\ref{fig: SetPartitionTree}}.}\label{alg: setPartition}
\end{algorithm}

\begin{proof}
We adopt the 4-qubit-per-vertex encoding described before Theorem~\ref{thm: setPartitionState}.

The step 1 of {\bf Algorithm~\ref{alg: setPartition}} creates the initial state for the Dicke state preparation algorithms. Steps 2 and 3 of {\bf Algorithm~\ref{alg: setPartition}} encode the information of the origins and the ends. We apply the Dicke state preparation algorithms four times on the third column of qubits for subsets A, B, C, D respectively to get the state $\ket{D^a_2}\ket{D^b_2}\ket{D^c_2}\ket{D^d_2}$. Because one vertex can be either the origin or the end, we need to symmetrically allocate one of two $\ket{1}$ in each component of $\ket{D^x_2}$ to the fourth column of qubits. It can be done by applying the basic module in {\bf Figure~\ref{fig: SplitModule}} in the step 3 of {\bf Algorithm~\ref{alg: setPartition}}.

\begin{figure}[H]
    \centering
    \includegraphics[width=.3\linewidth]{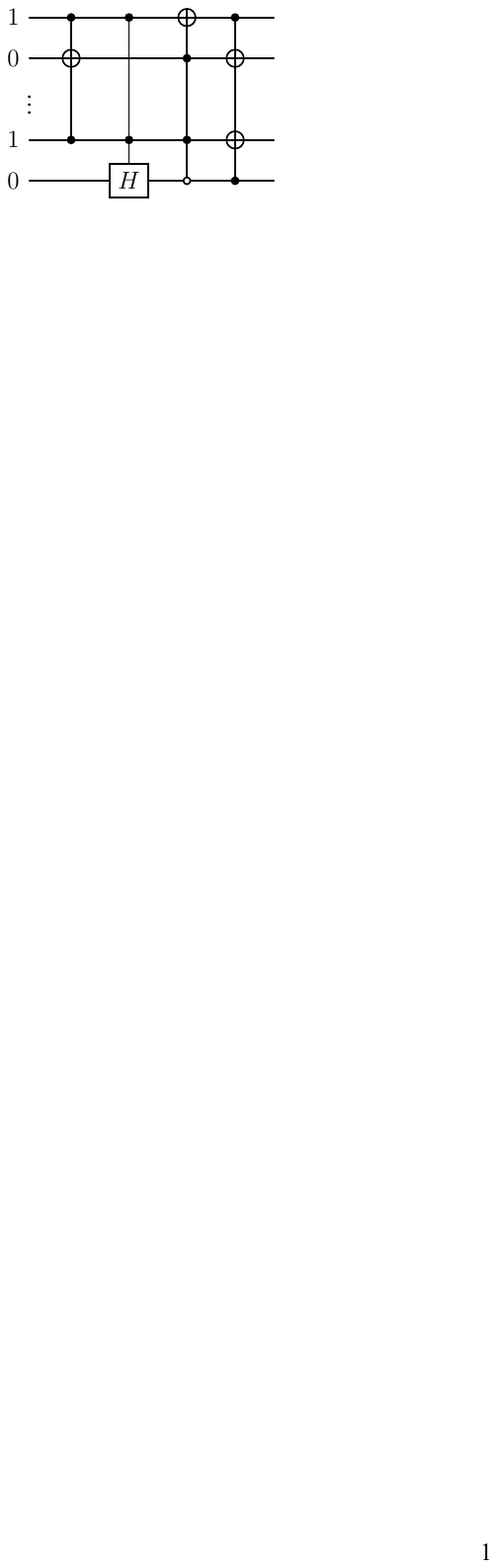}

    \caption{Basic module to split the components in $\ket{D^x_2}$, where $x=a,\ b,\ c,\ d$. We design it to transform the state $\ket{10}\ket{10}$ into the state $\frac{1}{\sqrt{2}}(\ket{10}\ket{01}+\ket{01}\ket{10})$.}
    \label{fig: SplitModule}
\end{figure}

Steps 4-6 of {\bf Algorithm~\ref{alg: setPartition}} encode the information of set partitions. For the set composed of subsets A and B, the first column of qubits is in state $\ket{0}$ and the second column of qubits distinguishes the labels A and B. Therefore, we can prepare the Dicke state $D^{a+b}_b$ over the second column of qubits, which is marked as stage $1$ in {\bf Figure~\ref{fig: SetPartitionTree}}. Similarly, we can prepare the Dicke state $D^{c+d}_d$ over the second column of qubits for subsets C and D, which is marked as stage $2$ in {\bf Figure~\ref{fig: SetPartitionTree}}; and prepare the Dicke state $D^{a+b+c+d}_{c+d}$ over the first column of qubits for all four subsets, which is marked as stage $3$.
When we apply the quantum gates of the Dicke state preparation circuit to a certain qubit, we must ensure that the other qubits encoding the same vertex are correspondingly acted upon. Hence, we need to modify the basic structures in {\bf Figure~\ref{fig: dicke-structure}}. The modified structures are provided in {\bf Figure~\ref{fig: par-structure}}, where we use controlled SWAP-gates to ensure that the information of other qubits follows the qubit acted upon by $RY$-gates.

\begin{figure}[H]
    \centering
    \begin{subfigure}{0.45\textwidth}
        \centering
        \includegraphics[width=\textwidth]{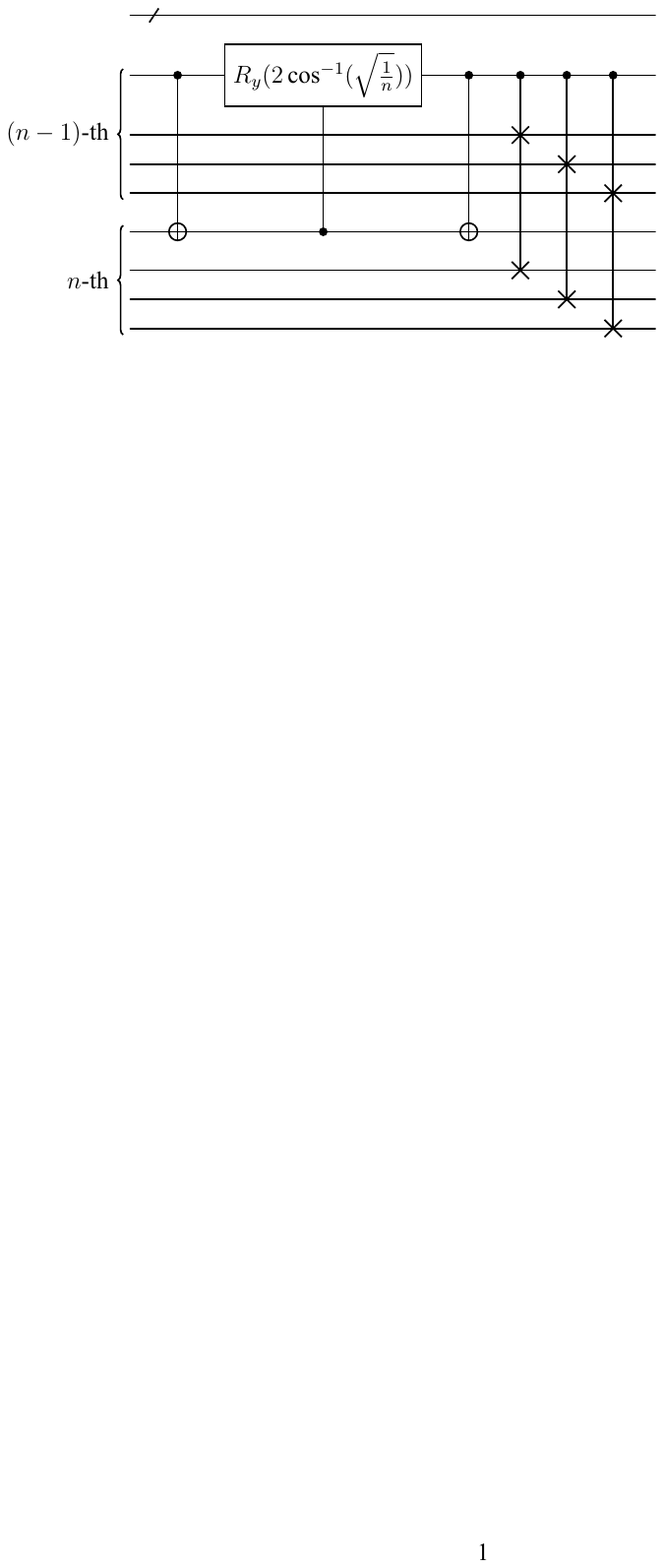}
        \caption{The first modified structure.}
        \label{fig: par-sub1}
    \end{subfigure}
    \hfill
    \begin{subfigure}{0.45\textwidth}
        \centering
        \includegraphics[width=\textwidth]{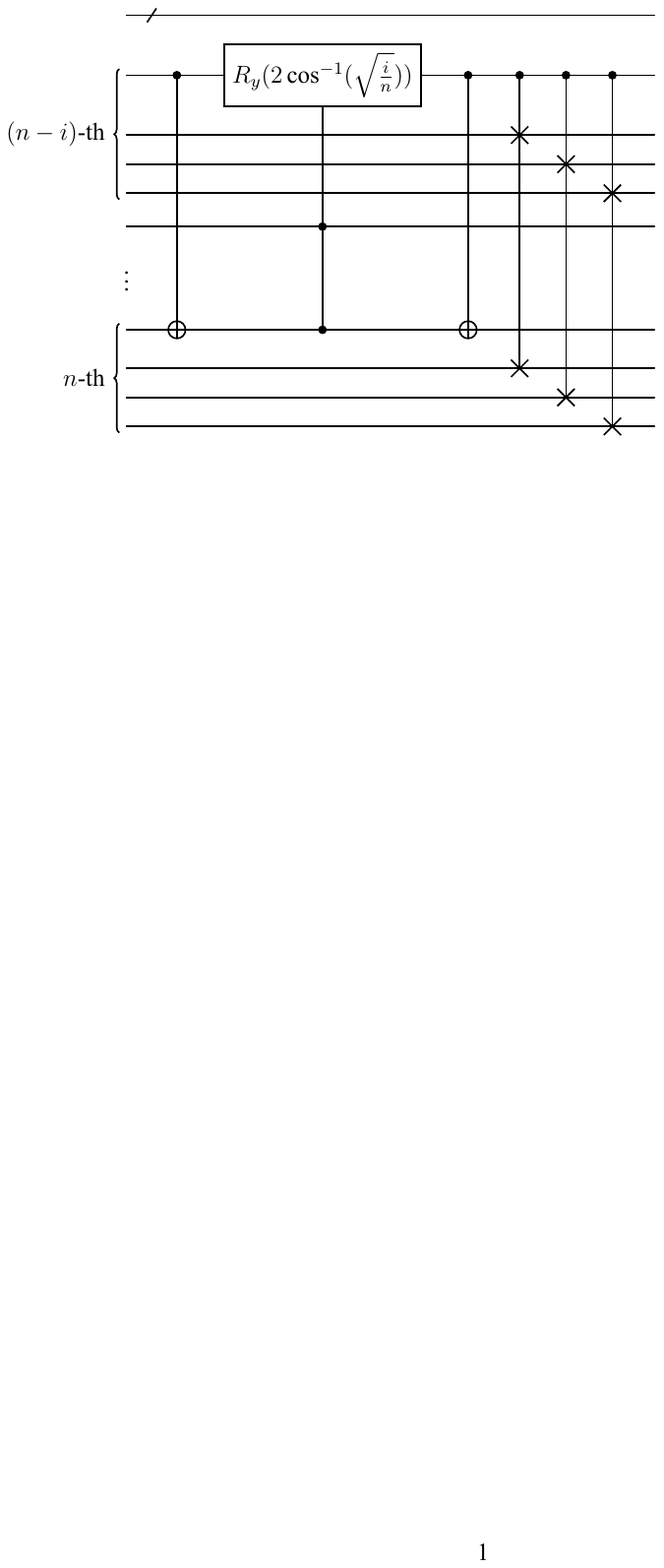}
        \caption{The second modified structure.}
        \label{fig: par-sub2}
    \end{subfigure}
    \caption{Basic modified structures of the quantum circuits for preparing the set partition states, where a vertex is encoded by $4$ qubits and in this figure the set composed of the first qubit of every vertex is acted upon by the $RY$-gates in the Dicke state preparation circuit.}
    \label{fig: par-structure}
\end{figure}

Finally, we demonstrate that we can obtain the set partition state correctly because the process encoding set partitions does not destroy the information about origins and ends:

{\bf Figure~\ref{fig: SetPartitionTree}} shows similar binary trees as {\bf Figure~\ref{fig: DickeTree}}. After the stage $1$ finished, there is a binary tree with $\binom{a+b}{a}$ leaf nodes. Since the state space factorizes as a tensor product, each leaf node connects to a new tree with $\binom{c+d}{c}$ leaf nodes after stage~2 finishes. We obtain a tree with $\binom{a+b}{a}\binom{c+d}{c}$ leaf nodes. After stage~3 finishes, each leaf node of this tree connects to yet another tree with $\binom{a+b+c+d}{a+b}$ leaf nodes. Finally, we obtain a tree with $\binom{a+b}{a}\binom{c+d}{c}\binom{a+b+c+d}{a+b}=\frac{(a+b+c+d)!}{a!b!c!d!}$ leaf nodes. Each path from the root to a leaf in this tree defines a mapping that sends the initial state to a particular component of the final state. This mapping, composed of a series of swap and identity, is a permutation. Since the permutation is bijective, the process encoding set partitions does not destroy the information about origins and ends: All the $2^4\binom{a}{2}\binom{b}{2}\binom{c}{2}\binom{d}{2}$ marked $u,\ v$ pairs are simply permuted. Hence, we get $2^4\binom{a}{2}\binom{b}{2}\binom{c}{2}\binom{d}{2}\frac{(a+b+c+d)!}{a!b!c!d!}$ components corresponding to all possible origins $u$, ends $v$, and $4$-partitions with sizes of $a,\ b,\ c,\ d$.

The step 2 of {\bf Algorithm~\ref{alg: setPartition}} requires the gate complexity of $O(4\cdot2n)=O(n)$.
The step 3 of {\bf Algorithm~\ref{alg: setPartition}} requires the gate complexity of $O(4\binom{n}{2})=O(n^2)$ and the depth of $O(n)$.
The step 4-6 of {\bf Algorithm~\ref{alg: setPartition}} requires the gate complexity of $O(3n^2)=O(n^2)$ and the depth of $O(n)$.
Therefore, our algorithm requires gate complexity $O(n^2)$ and depth $O(n)$ in total.
\end{proof}

We also consider methods for preparing the uniform superposition states that encode Hamiltonian cycles and permutations of order $n$. Although our algorithm does not use these two superposition states, they are relevant to quantum search algorithm for solving TSP~\cite{zhu2022realizable, bai2025quantum}. (In the previous work~\cite{bai2025quantum}, the authors utilized the superposition state of Hamiltonian cycles to index data combining a shortcut of QFT. In this work, we adopt the set partition state instead.) 
Actually, they are quantum counterparts of the classical random generation of cycles and permutations, which are Sattolo's algorithm~\cite{sattolo1986algorithm, wilson2005overview} and Fisher-Yates shuffle~\cite{fisher1957statistical} respectively. Our quantum counterpart of Sattolo's algorithm is a small extension of the previous work~\cite{bai2025quantum}. Here, we improve the gate complexity from $O(n^{\frac{5}{2}})$ in~\cite{bai2025quantum} to $O(n^2\log_2n)$ by optimizing the order of adding vertices and execution of basic modules. Furthermore, we generalize it to the case of permutations. One can refer to the following theorem and its proof in \hyperref[app3]{\bf Appendix C} for details.
\begin{theorem} \label{thm: qdp-HC-P}
    The superposition state of all Hamiltonian cycles of length $n$ $\sum_{\sigma_{n}\in HC_{n}}\frac{1}{\sqrt{(n-1)!}}\ket{\sigma_{n}}$ can be prepared with the gate complexity of $O(n^2\log_2n)$ using $\lceil \log_2n\rceil+1$ ancillary qubits. Similarly, we can prepare the superposition state of all permutations of length $n$ using the same resources.
\end{theorem}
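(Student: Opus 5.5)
The plan is to build the state as a quantum analogue of Sattolo's algorithm (for cycles) and of the Fisher--Yates shuffle (for permutations). We grow the object one vertex at a time, keeping a uniform superposition over all structures on the vertices inserted so far.

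\textbf{Encoding.} Each permutation $\sigma_n$ is stored as $n$ registers of $\lceil\log_2 n\rceil$ qubits, where register $i$ holds $\sigma(i)$. A Hamiltonian cycle is stored as its successor function, i.e.\ a cyclic permutation.

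\textbf{Induction step.} Suppose we already have $\frac{1}{\sqrt{(j-1)!}}\sum_{\sigma_j\in HC_j}\ket{\sigma_j}$. Every cycle on $j+1$ vertices arises uniquely from a cycle on $j$ vertices by inserting vertex $j$ after some vertex $t\in\{0,\dots,j-1\}$. Classically, Sattolo sets $\sigma'(j)=\sigma(t)$ and $\sigma'(t)=j$. Quantumly, this step has three parts:
\begin{enumerate}
\item Prepare the uniform superposition $\frac{1}{\sqrt{j}}\sum_{t<j}\ket{t}$ on an ancilla register of $\lceil\log_2 n\rceil$ qubits. This is done with a standard non-power-of-two uniform state preparation, using $O(\log n)$ gates plus one flag qubit.
\item Controlled on $t$, swap register $t$ with register $j$, which is initialised to $\ket{j}$. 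This is a multiplexed swap over $j$ positions.
\item Uncompute $t$ deterministically. Register $t$ now holds $j$, and it is the unique register holding $j$. So for each position $s<j$ we compare register $s$ with the constant $j$ and XOR $s$ into the ancilla conditioned on equality. This returns the ancilla to $\ket{0}$ and makes the map an isometry.
\end{enumerate}
Injectivity of the insertion map guarantees that the amplitudes stay uniform with value $1/\sqrt{j!}$. The $\lceil\log_2 n\rceil$ index qubits plus one flag qubit account for the stated $\lceil\log_2 n\rceil+1$ ancillas.

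\textbf{Permutations.} Fisher--Yates is the same procedure with $t$ ranging over $\{0,\dots,j\}$, where $t=j$ means a fixed point. The controlled swap is then the identity on that branch, and uncomputation again locates the unique register holding $j$. This is also a bijection, $S_j\times[j+1]\to S_{j+1}$, so the resources are identical.

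\textbf{Gate count.} Step $j$ costs $O(j)$ controlled swaps of $\log n$-qubit registers, giving $O(j\log n)$ gates. Summing over $j$ gives $O(n^2\log n)$. The main obstacle is achieving this without the extra $\sqrt{n}$ factor of~\cite{bai2025quantum}. A naive decoding of $t$ into a one-hot control for each of the $j$ swaps costs $O(j\log j)$ per step, which is acceptable, but decoding $t$ with many ancillas would violate the ancilla budget.

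To stay within $\lceil\log_2 n\rceil+1$ ancillas, I would order the positions $s=0,\dots,j-1$ in Gray-code order. Between consecutive positions, the equality test ``$t=s$'' then changes only by flipping one control polarity, so each swap is a $\lceil\log_2 n\rceil$-controlled swap implemented with $O(\log n)$ gates using the flag qubit as a borrowed ancilla. The uncomputation pass is handled symmetrically. I would verify this per-step $O(j\log n)$ bound carefully, since this is where the improvement over $O(n^{5/2})$ claimed in the statement comes from.
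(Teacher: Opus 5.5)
Your proposal is correct. It implements the same Sattolo/Fisher--Yates insertion as the paper, with the same per-step cost $O(j\log n)$ and the same $\lceil\log_2 n\rceil+1$ ancillas, but the quantum mechanism is different.

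\textbf{The paper's route.} The paper never samples a position. Using exact Grover with $O(1)$ iterations, it prepares a uniform superposition of a \emph{value} $i\in\{1,\dots,k-1\}$ directly in the fresh register $k$. That value becomes the new entry $\sigma_k(k)=\sigma_{k-1}(t)$, the successor of the inserted vertex, so the random choice never has to be uncomputed. It then scans registers $1,\dots,k-1$ with the operation $x\oplus_0 y$ of Lemma~\ref{lem: oplus}, which zeroes the unique register equal to $i$. A second scan relabels that $0$ as $k$. The ancillas serve only as comparison workspace. The price is that $0$ acts as a blank marker, so vertex $0$ must be excluded throughout and inserted in a special final step.

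\textbf{Your route.} You sample the predecessor position $t$ into an ancilla register and do a multiplexed swap against a register preset to $j$. You then need a second pass to uncompute $t$ by locating the unique register equal to $j$. So your index register is genuine workspace that must be cleaned. In exchange, you treat every vertex, including $0$, uniformly and need no marker value. Correctness then reduces directly to the classical bijection $HC_j\times[j]\to HC_{j+1}$, respectively $S_j\times[j+1]\to S_{j+1}$.

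\textbf{One point to tidy.} The Gray-code ordering is unnecessary, and the flag qubit should be a clean target rather than a borrowed ancilla. For each $s$, proceed as follows:
\begin{enumerate}
\item Compute $[t=s]$ into the flag using polarity $X$ gates and one $\lceil\log_2 n\rceil$-controlled NOT. This is linear size if an idle data qubit serves as the borrowed ancilla, as in \cite{gidney2015constructing}.
\item Apply $\lceil\log_2 n\rceil$ Fredkin gates controlled by the flag.
\item Uncompute the flag.
\end{enumerate}
This already costs $O(\log n)$ per position. By contrast, attaching the full multi-control to each of the $\lceil\log_2 n\rceil$ single-qubit swaps separately would cost $O(\log^2 n)$ per position, i.e.\ $O(n^2\log^2 n)$ overall.
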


In the case of permutation, there is related work proposing the quantum Fisher-Yates shuffle~\cite{binkowski2025quantum}, which shares the same underlying methods as our algorithm, i.e., quantizing classical random generation algorithms to prepare the superposition states.

\subsection{Implementing oracle with QRAM} \label{secIIIC}
We first prepare the set partition state 
$\sum_{p\in P}\frac{1}{\sqrt{|P|}}\ket{p}\ket{0}^{\otimes 4M}=\sum_{p\in P}\frac{1}{\sqrt{|P|}}\ket{**uv}^{\otimes n}\ket{0}^{\otimes4 M}$ using {\bf Algorithm~\ref{alg: setPartition}}.
Then we define Boolean functions $f_i:P\to\{0,\ 1\}^M$, where $f_i(p)$ is the binary value of the length of the shortest path in subset $S_i$, i.e., $f(S_i,\ u_i,\ v_i)$; and the corresponding Boolean function selectors: $\text{select}(f_i)\ket{p}\ket{z}=\ket{p}\ket{z\oplus f_i(p)}$, where $\oplus$ represents bit-wise XOR.
We can prepare the state $\text{select}(f_i)\sum_{p\in P}\frac{1}{\sqrt{|P|}}\ket{p}\ket{0}^{\otimes M}=\sum_{p\in P}\frac{1}{\sqrt{|P|}}\ket{p}\ket{f_i(p)}$ with circuit depth $O(4n)=O(n)$ and $O^*(2^{4n})$ ancillary qubits using only single- and two-qubit gates~\cite{zhang2022quantum, hann2019hardware, hann2021resilience, giovannetti2008quantum, shen2026bucket}.

Denote $f(S_i,\ u_i,\ v_i)$ as $f_i$. We invoke the $\text{select}(f_i)$ four times ($i=A,B,C,D$) to transform the state $\sum_p\frac{1}{\sqrt{|P|}}\ket{p}\ket{0}\ket{0}\ket{0}\ket{0}$ into $\sum_p\frac{1}{\sqrt{|P|}}\ket{p}\ket{f_A}\ket{f_B}\ket{f_C}\ket{f_D}$. Then we use a quantum adder to obtain $\sum_p\frac{1}{\sqrt{|P|}}\ket{p}\ket{f_A+f_B+f_C+f_D}\ket{f_B}\ket{f_C}\ket{f_D}$. A quantum adder can be implemented using standard techniques (see Appendix E.2. in~\cite{cain2026shor}). Next, we load the weight of the edges connecting different subsets and threshold by applying QFT, controlled $U_G(\omega_{vu})$, controlled $U_G(-C_T)$ and $\text{QFT}^{\dagger}$, where the controlled $U_G(\theta)$-gate is introduced in \hyperref[app2]{\bf Appendix B}. Now we obtain the state
\begin{align}
    \sum_p\frac{1}{\sqrt{|P|}}\ket{p}&|f(S_A,\ u_A,\ v_A)+\omega_{v_Au_B}+f(S_B,\ u_B,\ v_B)+\omega_{v_Bu_C}+f(S_C,\ u_C,\ v_C) \notag\\
    &+\omega_{v_Cu_D}+f(S_D,\ u_D,\ v_D)+\omega_{v_Du_A}-C_T\rangle\ket{f_B}\ket{f_C}\ket{f_D}.\notag
\end{align}
Mark the solutions with lengths lower than the threshold $C_T$ by applying the $Z$-gate on the sign qubit of the second register. Finally, we uncompute all value registers by applying the inverse operators.

\begin{figure}[H]
    \centering
    \includegraphics[width=.6\linewidth]{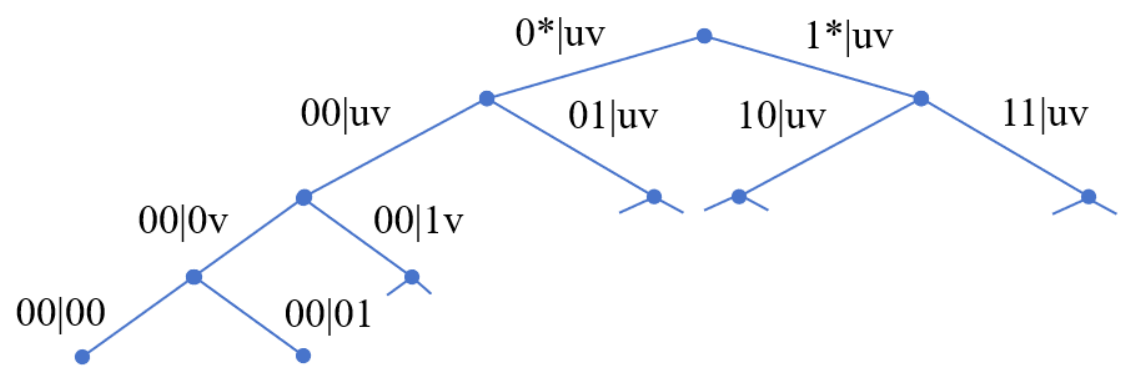}

    \caption{Address transmission for one vertex (four qubits) in our encoding method. For the branches that mark the origins and ends, we only retain those corresponding to subset A.}
    \label{fig: QRAM2}
\end{figure}
We briefly explain how $\text{select}(f_i)$ accesses the shortest path data in the classical DP table based on the components $\ket{p}$ of the quantum state using the bucket-brigade QRAM. The auxiliary qubits are organized into a tree-like logical structure, and the binary string encoded in each component serves as an address that determines an index path from the root node to a leaf node bit by bit.
{\bf Figure~\ref{fig: QRAM2}} illustrates how the 4-qubit encoding of each vertex determines the branching of the index path in a simple way. When a qubit is in the state $\ket{0}$, the left branch is selected; otherwise, the right branch is selected. Among $4$ qubits encoding one vertex, the first two qubits determine which subset the corresponding vertex belongs to, while the last two qubits determine whether the current vertex is an origin or an end. Upon reaching a leaf node, the corresponding classical register is accessed. The accessed registers can query the precomputed shortest path data $f(S_i,\ u_i,\ v_i)$ in parallel from the classical DP table directly.

The exponential qubits requirement is a limitation shared by all current proposals for QRAM. Although the QRAM requires exponential qubits and gate complexity, the circuit depth for querying classical data is $O(n)$. Therefore, the oracle of our TSP solver can compute the lengths of the Hamiltonian cycles within polynomial time complexity. Assisted by the QRAM, the total time complexity of our TSP solver is $O^*(1.865666...^n)$, which is superior to the best classical algorithm (Held-Karp).

\section{Simulation experiments} \label{secIV}
\subsection{Framework} \label{secIVA}
To demonstrate the performance of our TSP solver, we conducted simulation experiments on small instances using IBM’s Qiskit. Due to the qubit limitation of the simulator, we replaced the QRAM with the shortcut of QFT in the simulation experiments.
{\bf Figure~\ref{fig: TSPsolver}} shows the framework of our TSP solver in the simulation experiments. We divide all qubits into the index register and the value register. $Ind$-module can prepare the set partition state in the index register. The controlled $U_G$-gate is a shortcut of QFT, which is introduced in \hyperref[app2]{\bf Appendix B}. The oracle queries the classical data based on the index register and calculates the lengths of the corresponding Hamiltonian cycles in the value register. The purple part is standard Grover's diffusion operator $A(I-2\ket{0}\bra{0})A^{\dagger}$ with $A$ replaced by $Ind$-module.

\begin{figure}[H]
    \centering
    \includegraphics[width=0.9\linewidth]{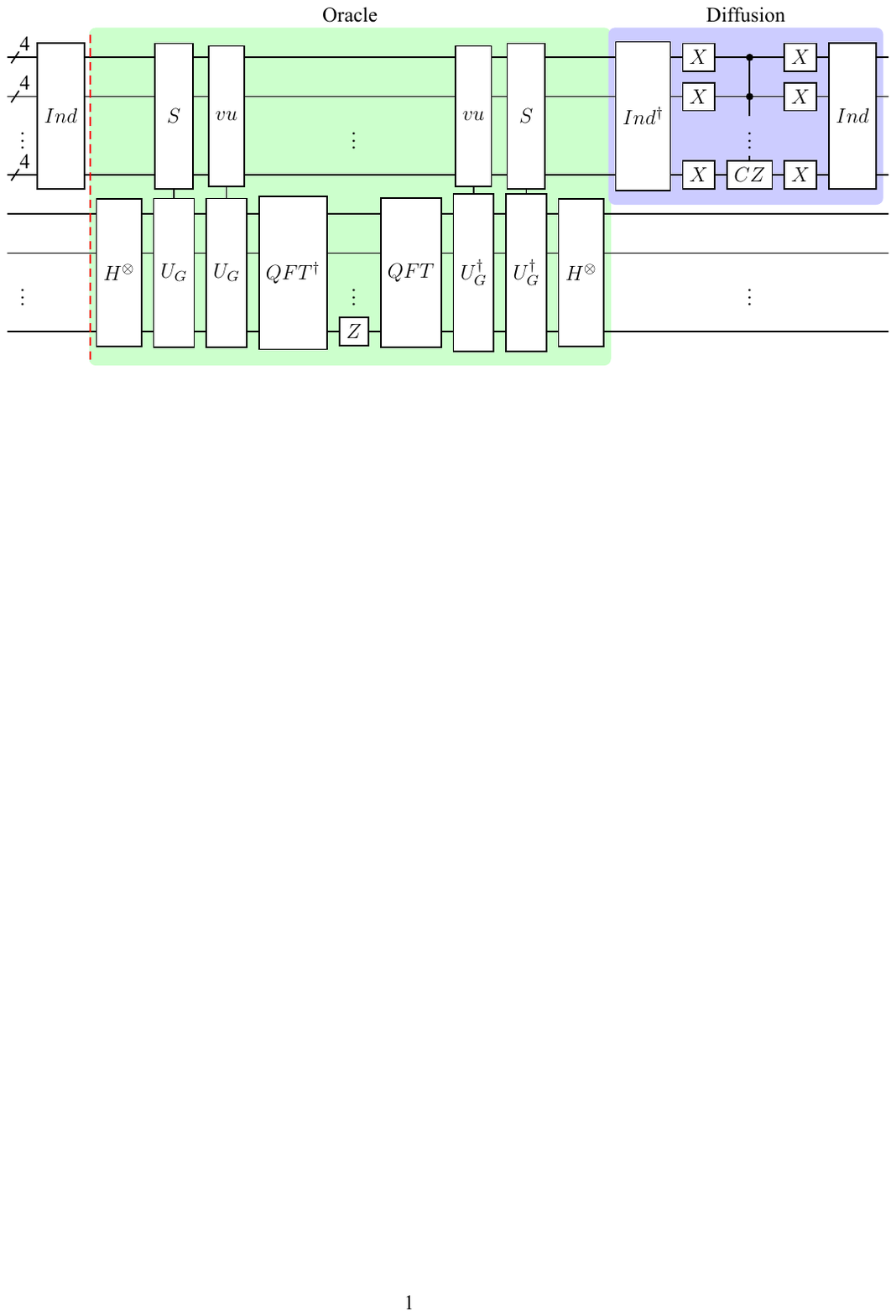}

    \caption{The framework of our TSP solver, where the $Ind$-module represents the set partition state preparation algorithm and the controlled $U_G$-gate is introduced in \hyperref[app2]{\bf Appendix B}. The part to the right of the red dashed line is Grover search module, which will be executed many times to amplify the amplitude of the optimal solution.}
    \label{fig: TSPsolver}
\end{figure}

To reduce the additional resource consumption caused by redundant encoding of solutions, we require that the salesman starts from vertex $0$ and returns to vertex $0$ at the end, i.e., $4$-qubit encoding vertex $0$ is fixed as $\ket{0010}$ and for the other vertices in subset A we set $\ket{u}=\ket{0}$. In this way, we eliminate $n$ repetitions caused by different starting points, and the number of qubits occupied by $Ind$-module will also be reduced from $4n$ to $4(n-1)$.

\subsection{Results} \label{secIVB}
Due to the memory limitation, we use the case of $3$-partition for two graphs with node number $n=6,\ 7$ and adopt shortcut of QFT to load the classical data. The adjacency matrices of these two graphs are: 

\begin{equation*}
    \mathbf{X_6}=\left(
    \begin{array}{cccccc}
        0 & 1 & 3 & 2 & 2 & 1 \\
        1 & 0 & 1 & 3 & 1 & 2 \\
        3 & 1 & 0 & 3 & 2 & 3 \\
        2 & 3 & 3 & 0 & 1 & 1 \\
        2 & 1 & 2 & 1 & 0 & 3 \\
        1 & 2 & 3 & 1 & 3 & 0 \\
    \end{array}\right),\
    \mathbf{X_7}=\left(
    \begin{array}{ccccccc}
        0 & 1 & 3 & 2 & 2 & 1 & 1 \\
        1 & 0 & 1 & 3 & 1 & 2 & 1 \\
        3 & 1 & 0 & 3 & 2 & 3 & 1 \\
        2 & 3 & 3 & 0 & 1 & 1 & 1 \\
        2 & 1 & 2 & 1 & 0 & 3 & 1 \\
        1 & 2 & 3 & 1 & 3 & 0 & 1 \\
        1 & 1 & 1 & 1 & 1 & 1 & 0 \\
    \end{array}\right).\
    \label{equ: adjacency matrices}
\end{equation*}
We initialized the threshold with the length of a random solution, and iteratively updated the threshold by executing the quantum exponential searching~\cite{durr1996quantum, boyer1998tight}. The simulation experiments found the optimal solution on both graphs. 

We analyzed the probability of obtaining the optimal solutions under the optimal number of iterations. To estimate this probability, we consider the running stage after the final threshold update in the quantum minimum finding, where the threshold is updated to $C_T=w_{p^*}+1$ ($w_{p^*}$ is the length of the optimal solution). The same quantum circuit was executed for $shots=1000$ times to obtain the statistical data of the components of the final quantum state. 
{\bf Table~\ref{tab: sim}} shows our simulation results on IBM's Qiskit, where ``qubits" denotes the required number of qubits, ``num" denotes the number of optimal solutions, ``iterations" denotes the optimal number of iterations, $x=\lceil\sqrt{\frac{(n-1)!}{(a-1)!b!c!\cdot\text{num}}}\rceil$ and ``probability" is computed from the proportion of the optimal solutions in 1000 samples. For the graph with $n=6$, we take $a=2,\ b=2,\ c=2$. For the graph with $n=7$, we take $a=3,\ b=2,\ c=2$. 
We also provide the bar charts in {\bf Figure~\ref{fig: sim-res}} in \hyperref[app4]{\bf Appendix D}.

\begin{table}[h]
    \centering
    \begin{threeparttable}[b]
        \caption{Simulation Results} \label{tab: sim}
        \setlength{\tabcolsep}{2mm}
        \begin{tabular}{c c c c c c}
            \toprule[1.5pt]
            \     &  $n$ & qubits & num     & iterations   & probability    \\ \hline
            $X_6$ &  6   & 25     & 2       & $x+2$        & $98.9\%$     \\
            $X_7$ &  7   & 29     & 4       & $x+5$        & $100\%$    \\
            \bottomrule[1.5pt]
        \end{tabular}
    \end{threeparttable}
\end{table}

The high success probabilities confirm that quantum minimum finding concentrates amplitude on the optimal solutions as expected. These small-scale experiments serve as a proof of concept for the quantum divide-and-conquer framework.

\section{Discussion} \label{secV}

We have shown that combining classical dynamic programming with quantum search achieves a quantum advantage on the NP-hard traveling salesman problem. Our quantum divide-and-conquer strategy provides a parameterized family of algorithms spanning this combination. {\bf Algorithm~\ref{alg: hDP-TSP}} proposed in~\cite{ambainis2019quantum} corresponds to a specific case with $k=8$ and $\alpha_1=\ ...\ =\alpha_4=(1-\alpha)/4,\ \alpha_5=\ ...\ =\alpha_8=\alpha/4$, while the two extremes are the purely classical Held-Karp ($k=1$) and the purely quantum Grover's search algorithm ($k=n$ or $k=n-1$). When $k=n$, the query complexity becomes $\sqrt{n!}$, which corresponds to directly searching over all permutations of order $n$ for the optimal solution. If the salesman starts from vertex $0$ and we take $k=n-1$, the query complexity is $\sqrt{(n-1)!}$, which corresponds to directly searching over all Hamiltonian cycles of order $n$ for the optimal solution~\cite{bai2025quantum}.
To reduce the depth of each oracle query to the polynomial level, we use the QRAM, trading an exponential number of qubits for polynomial circuit depth.

Prior studies on quantum advantages for NP-hard combinatorial optimization problems focused only on improvements in query complexity. Our work shows that the quantum advantage comes not only from the quadratic speedup of quantum search but also from structured quantum state preparation. Structured states eliminate the need to precompute, beyond the DP table, an additional long array that directly indexes the solutions. This avoids the extra classical overhead that would otherwise erase the quantum advantage.

\subsection{Generalization to other combinatorial optimization problems}
The quantum divide-and-conquer strategy is not limited to TSP. Any combinatorial optimization problem whose solutions decompose into independent components can be attacked by the same three-step approach:
\begin{enumerate}
    \item[(i)] Decompose the solutions into components. Partition the universal set into $k$ disjoint subsets according to the problem structure, cutting each global solution into $k$ components.
    \item[(ii)] Precompute component costs classically. Compute the optimal cost of each component via classical dynamic programming over subsets.
    \item[(iii)] Quantum search over partitions. Prepare a superposition state encoding all valid partitions, construct a problem-specific oracle that assembles components into complete solutions and evaluates their costs, and run quantum minimum finding.
\end{enumerate}

Beyond TSP, this applies to set-partition-based or path-based optimization problems.
The framework is general, but each step must be customized to the target problem. 
Three aspects deserve particular attention. First, the superposition state must encode the relevant combinatorial objects (as we construct for set partitions in {\bf Theorem~\ref{thm: setPartitionState}}). 
Second, the oracle must evaluate the correct cost function. Third, the optimal parameter choices---including the number of subsets $k$, the subset-size distribution, and the balance between classical precomputation and quantum search---are inherently problem-dependent. 
A problem fits the framework when its global solution decomposes into components whose costs are precomputable independently. 
Problems with inherently sequential DP recursion, such as scheduling problems (minimizing total weighted completion time under precedence constraints), where the cost of adding an element depends on the entire accumulated set ($f(S \cup \{i\}) = \min\{f(S) + g(S,i)\}$), do not directly fit this structure. Finding whether other DP formulations admit quantum speedups under this approach is an open question.

\pagebreak
\newpage
\beginsupplement
\setcounter{figure}{0}
\renewcommand{\thefigure}{A\arabic{figure}}

\section*{Appendix} \label{appendix}

\subsection*{A. The bucket-brigade QRAM} \label{app1}
The bucket-brigade QRAM architecture has been realized on the superconducting quantum processor for small models~\cite{shen2026bucket}.
{\bf Figure~\ref{fig: QRAM1}} shows the principle of the bucket-brigade QRAM loading $8$ data items. The address bits within the green box are sequentially fed into the QRAM. At each step, a branch is selected based on whether the current bit is 0 or 1, while the subsequent address bits are then passed downward and stored in the green address register. Upon reaching a leaf node, the classical data item is loaded into the blue data register via quantum gates. The data is then either transmitted back along the reverse path of the address traversal or transferred to the target value register $d$ via quantum teleportation. The transmission of address bits and data is implemented using the controlled-SWAP network. For specific details, see~\cite{shen2026bucket}. 

\begin{figure}[H]
    \centering
    \includegraphics[width=.5\linewidth]{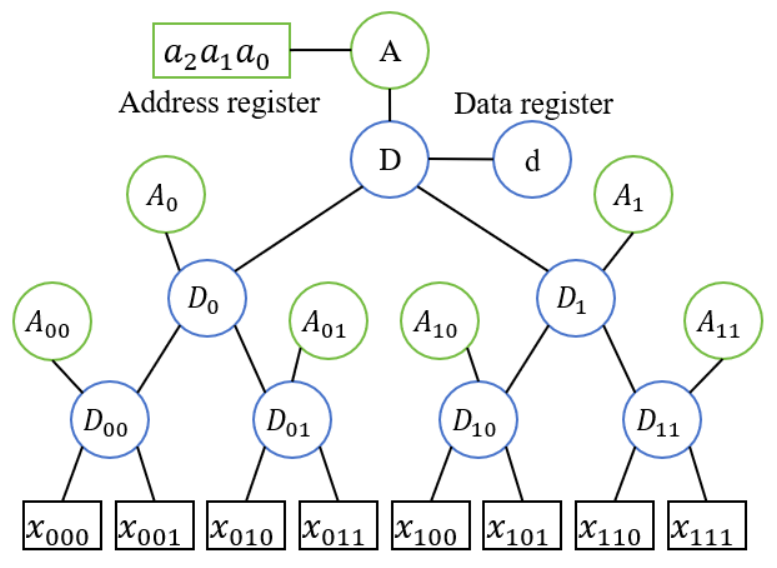}

    \caption{Schematic diagram of the bucket-brigade QRAM loading $8$ data items.}
    \label{fig: QRAM1}
\end{figure}

\subsection*{B. Shortcut of QFT} \label{app2}
In the simulation experiment, we utilize a shortcut of the quantum Fourier transform (QFT) to calculate the data in value register under the control of the index register. Gilliam et al.~\cite{gilliam2021grover} proposed this shortcut to address the constrained polynomial binary optimization (CPBO). We extend this idea to generally handle the value register through the index register.

Define the operator $U_G(\theta)=\bigotimes_{l=0}^{M-1}R(2^l\theta)$, one can verify
\begin{equation}
    U_G(\theta)H^{\otimes M}\ket{0}^{\otimes M}=\frac{1}{\sqrt{2^M}}\sum\limits^{2^M-1}_{j=0}\exp(ij\theta)\ket{j}_M,
    \label{equ:sqft1}
\end{equation}
which is analogous to

\begin{equation}
    QFT\ket{k}=\frac{1}{\sqrt{2^M}}\sum^{2^M-1}_{j=0}\exp(\frac{2\pi ikj}{2^M})\ket{j}_M\ ,\ 0\leq k\leq 2^M-1.
    \label{equ:sqft2}
\end{equation}
Hence, when $\theta$ is fixed, $U_G(\theta)$ can be regarded as a shortcut of QFT because there are no multi-qubit interactions in $U_G(\theta)$. Take $\theta=2\pi k/2^M\ (-2^{M-1}\leq k<2^{M-1})$, combining (\ref{equ:sqft1}) and (\ref{equ:sqft2}) we have
\begin{equation}
    QFT^{\dagger}U_G(\theta)H^{\otimes M}\ket{0}^{\otimes M}=\ket{k\mod2^M}.
    \label{equ:sqft3}
\end{equation}
For $-2^{M-1}\leq k<0$ we have $\ket{k\mod2^M}=\ket{k+2^M\mod2^M}$, meaning that the highest (leftmost) digit is 1. Therefore, we can easily identify a negative number by examining its sign qubit (the highest digit). Although (\ref{equ:sqft3}) can only encode the integer exactly, there are two methods in~\cite{gilliam2021grover} to handle general real numbers, which are approximating real coefficients by fractions and encoding real coefficients as Fejér distributions, respectively. Now, we can encode data into $\theta$ and operate the value register using $U_G(\theta)$ controlled by index register. Particularly, $U_G(\theta)$ has the property of
\begin{equation}
    U_G(\theta_1)U_G(\theta_2)=U_G(\theta_1+\theta_2),
    \label{equ:sqft4}
\end{equation}
which facilitates the addition of two values. To avoid the hardware challenges associated with implementing quantum gates that require exponentially small parameters, we usually set $M=c\lceil\log_2n\rceil$ so that $2^M=O(n^c)$. Increasing the value of the constant $c$ can improve encoding accuracy.

In {\bf Figure~\ref{fig: TSPsolver}}, $U_G$ operators controlled by $S$-module are divided into four groups because we want to utilize the localized matching technology to match the subsets with label A, B, C, D separately. For example, we use $2\binom{b}{2}\binom{n-1}{b}$ controlled $U_G(f(S_B,\ u_B,\ v_B))$ operators with $4b$ controlling qubits taking all possible combinations in $4(n-1)$ qubits. (Every $4$ qubits are bound together to encode a vertex.) After we applied all $\binom{a}{1}\binom{n-1}{a}+2\binom{b}{2}\binom{n-1}{b}+2\binom{c}{2}\binom{n-1}{c}+2\binom{d}{2}\binom{n-1}{d}$ controlled $U_G$ operators, we obtain 
\begin{equation}
    QFT\ket{f(S_A,\ 0,\ v_A)+f(S_B,\ u_B,\ v_B)+f(S_C,\ u_C,\ v_C)+f(S_D,\ u_D,\ v_D)}
\end{equation}
in the value register. Then we utilize $U_G$ operators controlled by $vu$-module to match the edge connections between subsets with different labels. For example, we use $2\binom{n-1}{2}$ controlled $U_G(\omega_{v_Au_B})$ with controlling qubits matching all possible $\ket{0001},\ \ket{0110}$ pairs located in $4(n-1)$ qubits. After we applied these $\binom{n-1}{1}+3\cdot2\binom{n-1}{2}$ controlled $U_G$ operators, we obtain 
\begin{align}
    &QFT|f(S_A,\ 0,\ v_A)+\omega_{v_Au_B}+f(S_B,\ u_B,\ v_B)+\omega_{v_Bu_C}+f(S_C,\ u_C,\ v_C)+\omega_{v_Cu_D} \notag\\
    &\ \ \ \ \ \ \ \ +f(S_D,\ u_D,\ v_D)+\omega_{v_D0}\rangle \notag\\
    =&QFT\ket{\text{length of a certain Hamiltonian cycle}}
    \label{equ: HCvalue}
\end{align}
in the value register. Finally, $U_G(-C_T)$ (which is not drawn in {\bf Figure~\ref{fig: TSPsolver}}) and $\text{QFT}^{\dagger}$ act on the value register. Now we can use $Z$-gate acting on the sign qubit to mark a solution with length lower than threshold $C_T$. By applying the inverses of the above operators (except $Z$-gate) we can uncompute the value register.

Through quantum exponential searching~\cite{durr1996quantum, boyer1998tight}, we can find the optimal solution by iteratively updating the threshold.

\subsection*{C. State preparation for Hamiltonian cycle and permutation} \label{app3}
The following two straightforward lemmas enable the iterative generation of Hamiltonian cycles and permutations.
\begin{lem}
    \label{lem: HC-c-generate}
    Given a new vertex $v\notin \{1, \dots,\ k-1\}$, a Hamiltonian cycle of length $k-1$ denoted by
    $\sigma_{k-1}=[\sigma_{k-1}(1),\ \sigma_{k-1}(2),\ \ldots,\ \sigma_{k-1}(k-1)]$ (as a permutation).
    We can generate $k-1$ different Hamiltonian cycles of length $k$ where $1\leq i\leq k-1$:
    \[ \sigma_{k}=\left(
    \begin{array}{ccccccc}
    1& 2& \ldots& i& \ldots& k-1& v\\
    \sigma_{k-1}(1)& \sigma_{k-1}(2)& \ldots& v& \ldots& \sigma_{k-1}(k-1)& \sigma_{k-1}(i)\\
    \end{array}\right).\]
    If we already have all Hamiltonian cycles of length $k-1$, then we can obtain all Hamiltonian cycles of length $k$ using the above method.
\end{lem}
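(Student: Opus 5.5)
The statement to prove is Lemma~\ref{lem: HC-c-generate}. A Hamiltonian cycle of length $k$ on the vertex set $\{1,\dots,k\}$ (here $v=k$) is identified with a cyclic permutation, i.e.\ a single $k$-cycle $\sigma_k$. Under this identification $\sigma_k(j)$ is the successor of $j$ on the tour. The plan is to show three things: the construction yields a $k$-cycle, distinct choices of $(\sigma_{k-1},i)$ yield distinct cycles, and every $k$-cycle arises this way. Together these give a bijection between $\{(k-1)\text{-cycles}\}\times\{1,\dots,k-1\}$ and $\{k\text{-cycles}\}$, which is also consistent with the count $(k-1)!=(k-2)!\,(k-1)$.

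\textbf{Step 1 (well-definedness).} Fix $\sigma_{k-1}$ and $i$. The construction sets $\sigma_k(i)=v$ and $\sigma_k(v)=\sigma_{k-1}(i)$, and leaves $\sigma_k(j)=\sigma_{k-1}(j)$ for all other $j$. This is exactly the operation of inserting $v$ into the tour on the edge $i\to\sigma_{k-1}(i)$. I would check two facts:
\begin{itemize}
\item $\sigma_k$ is a bijection, because its image is $(\operatorname{Im}\sigma_{k-1}\setminus\{\sigma_{k-1}(i)\})\cup\{v,\sigma_{k-1}(i)\}$, which is all of $\{1,\dots,k\}$.
\item $\sigma_k$ is a single cycle. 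Starting from $i$ and following $\sigma_k$, the orbit runs $i\to v\to\sigma_{k-1}(i)\to\cdots$. It then traces the orbit of $\sigma_{k-1}$ until it returns to $i$, so it has length $(k-1)+1=k$.
\end{itemize}

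\textbf{Step 2 (injectivity and surjectivity).} The inverse map deletes $v$: given a $k$-cycle $\tau$, let $i=\tau^{-1}(v)$. Define $\sigma_{k-1}$ by $\sigma_{k-1}(i)=\tau(v)$ and $\sigma_{k-1}(j)=\tau(j)$ for all other $j\neq v$. This is a $(k-1)$-cycle on $\{1,\dots,k-1\}$, by the same orbit argument with $v$ skipped. Applying the construction to $(\sigma_{k-1},i)$ returns $\tau$, which gives surjectivity. Conversely, deleting $v$ from the output of Step 1 recovers both $i$ (as the preimage of $v$) and $\sigma_{k-1}$, which gives injectivity. In particular the $k-1$ cycles obtained from a fixed $\sigma_{k-1}$ are pairwise distinct, since they differ in $\sigma_k^{-1}(v)=i$.

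\textbf{Main obstacle.} The only delicate point is the orbit argument in Step 1 when $k-1=1$ or $k-1=2$, where $\sigma_{k-1}(i)$ may coincide with $i$. I would treat the base case $k=2$ separately: the single 1-cycle $[1]$ yields $\sigma_2=(1\,2)$. The general orbit argument then applies for $k\ge3$. The final claim of the lemma follows by applying the bijection to all $(k-1)$-cycles.
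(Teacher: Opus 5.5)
Your proof is correct. The paper gives no proof of this lemma: it calls it ``straightforward'' and uses it as the quantum analogue of Sattolo's algorithm in the proof of Theorem~\ref{thm: qdp-HC-P}. There, the substitution $i=\sigma_{k-1}(t)$ in Eq.~(\ref{equ: q-ind-HC8}) is exactly your step of inserting $v$ on the edge $t\to\sigma_{k-1}(t)$.

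Your insertion/deletion bijection between $\{(k-1)\text{-cycles}\}\times\{1,\dots,k-1\}$ and $\{k\text{-cycles}\}$ supplies the distinctness and completeness claims that the paper leaves implicit. Your reading of $\sigma_k(j)$ as the successor of $j$ is the right one. It matches the paper's base case $HC_2=\{[2,1]\}$ and the count $(k-1)!$. Under a one-line ``tour sequence'' reading the lemma would actually fail: from $[2,1]$ it would produce two rotations of the same tour.

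One small correction to your ``main obstacle''. For $k-1=2$ the cycle $(1\,2)$ has no fixed point, so $\sigma_{k-1}(i)\neq i$ there. For $k-1=1$, your general orbit computation already works, since $i\to v\to\sigma_1(i)=i$ has length $2$. So no separate case is needed. In any event, the paper only applies the lemma for $k\geq 3$, starting from $HC_2$.
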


\begin{lem}
    \label{lem: permutation-c-generate}
    Given a new vertex $v\notin \{1, \dots,\ k-1\}$, a permutation of length $k-1$ denoted by
    $\sigma_{k-1}=[\sigma_{k-1}(1),\ \sigma_{k-1}(2),\ \ldots,\ \sigma_{k-1}(k-1)]$.
    we can generate $k$ different permutations of length $k$ where $1\leq i\leq k-1$:
    \[ \sigma_{k}=\left(
    \begin{array}{cccccc}
    1& \ldots& i& \ldots& k-1& v\\
    \sigma_{k-1}(1)& \ldots& v& \ldots& \sigma_{k-1}(k-1)& \sigma_{k-1}(i)\\
    \end{array}\right)\ and\ \ 
    \sigma_{k}=\left(
    \begin{array}{ccccc}
    1& \ldots& k-1& v\\
    \sigma_{k-1}(1)& \ldots& \sigma_{k-1}(k-1)& v\\
    \end{array}\right).\]
    If we already have all permutations of length $k-1$, then we can obtain all permutations of length $k$ through the above method.
\end{lem}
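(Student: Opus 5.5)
The plan is to prove Lemma~\ref{lem: permutation-c-generate} by a direct counting-plus-injectivity argument, mirroring Lemma~\ref{lem: HC-c-generate}. The construction is as follows. Fix a permutation $\sigma_{k-1}$ of $\{1,\dots,k-1\}$, written in one-line notation. For each $1\le i\le k-1$, form $\sigma_k$ by placing $v$ at position $i$ and moving the displaced value $\sigma_{k-1}(i)$ to the new last position $k$. The $k$-th option keeps $\sigma_{k-1}$ and appends $v$ at position $k$. This is exactly the step of the Fisher--Yates shuffle that swaps position $k$ with a uniformly chosen position $i\in\{1,\dots,k\}$.

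First, I will check that each of these $k$ arrays is a permutation of $\{1,\dots,k-1\}\cup\{v\}$. Every old value still appears exactly once, and $v$ appears exactly once. The $k$ outputs are pairwise distinct because they differ in the position at which $v$ sits.

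Second, I will show that the map $(\sigma_{k-1},i)\mapsto\sigma_k$ is a bijection from $S_{k-1}\times\{1,\dots,k\}$ onto the permutations of length $k$. The inverse is explicit. Given $\sigma_k$, let $i=\sigma_k^{-1}(v)$. If $i=k$, delete the last entry. Otherwise, put $\sigma_k(k)$ back into position $i$ and delete position $k$. This recovers a unique pair $(\sigma_{k-1},i)$, so the map is injective. Counting gives $(k-1)!\cdot k=k!$, so it is also surjective.

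Surjectivity is the claim that all permutations of length $k$ are obtained, and it follows by induction on $k$ from the base case $k=1$. No real obstacle is expected. The only point needing care is stating the inverse map precisely, so that the case $i=k$ is clearly distinguished from the swap cases. The Hamiltonian-cycle version, Lemma~\ref{lem: HC-c-generate}, follows the same way. There the option $i=k$ is excluded: viewing $\sigma$ as a successor function, inserting $v$ between $i$ and $\sigma(i)$ preserves being a single cycle. The count becomes $(k-2)!\,(k-1)=(k-1)!$.
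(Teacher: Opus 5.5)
Your proof is correct. The paper gives no proof of this lemma. It calls it and Lemma~\ref{lem: HC-c-generate} ``straightforward'' and only identifies the constructions with the Fisher--Yates shuffle and Sattolo's algorithm, which is the same identification you make. So your explicit inverse ($i=\sigma_k^{-1}(v)$, then undo the swap) together with the count $(k-1)!\cdot k=k!$ completes what the authors leave implicit, rather than taking a different route.

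Your version also adds something. The statement only asserts that the $k$ outputs are distinct for a \emph{fixed} $\sigma_{k-1}$ and that every permutation of length $k$ is reached. The proof of Theorem~\ref{thm: qdp-HC-P}, however, needs the map $(\sigma_{k-1},i)\mapsto\sigma_k$ to be a bijection, with no collisions between different pairs. Otherwise the amplitudes in Eq.~(\ref{equ: q-ind-HC8}) and its permutation analogue would not be uniform. Your inverse map delivers exactly that.

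One small remark: at fixed $k$, surjectivity already follows from injectivity plus counting, or directly from the fact that your inverse is defined on every permutation of length $k$. So no induction on $k$ is needed for the lemma itself; induction only enters when the lemma is iterated to build the state.
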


We usually take $v$ as $k$, meaning that we add new vertices in ascending order to iteratively generate Hamiltonian cycles and permutations. The iterative processes in {\bf Lemma~\ref{lem: HC-c-generate}} and {\bf Lemma~\ref{lem: permutation-c-generate}} can be quantized to prepare the uniform superposition states that encode Hamiltonian cycles and permutations of order $n$. 

Before proving {\bf Theorem~\ref{thm: qdp-HC-P}}, we also need the following lemma.
\begin{lem}
    \label{lem: oplus}
    Let $m=\lceil\log_2n\rceil$, $x=x_{m-1}x_{m-2}\dots x_0$ and $y=y_{m-1}y_{m-2}\dots y_0$ be binary representations of positive integers $x,\ y$. Define the following binary operation:
    \begin{equation}
        x\oplus_0y=\left\{
        \begin{array}{rl}
             0, &  \text{if}\ \  x=y\\
             x, &  \text{if}\ \  x\neq y 
        \end{array}
        \right.\ .
        \label{equ: oplus1}
    \end{equation}
    We can borrow $m+1$ auxiliary qubits to transform the state $\sum_{x,y}c_{xy}\ket{x}_1\ket{y}_2\ket{0}_a\ket{0}_b$ into the state $\sum_{x,y}c_{xy}\ket{x\oplus_0y}\ket{y}\ket{0}_a\ket{0}_b$ with the gate complexity of $O(m)$, where $\ket{0}_a$ represents $m$ ancillary qubits and $\ket{0}_b$ represents $1$ extra ancillary qubits, $\sum_{x,y}|c_{xy}|^2=1$.
\end{lem}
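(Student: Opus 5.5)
The plan is to build the circuit directly and verify its action on computational basis states. Write $x\oplus_0 y=x\cdot[x\neq y]$. Equivalently, the map sends $\ket{x}\ket{y}$ to $\ket{x}\ket{y}$ when $x\neq y$ and to $\ket{0}\ket{y}$ when $x=y$. This is a permutation of basis states on the pairs with $x\neq y$ and $x\neq 0$, and on the diagonal. Since $x,y$ are positive, $x=0$ never occurs in the input. So the map is injective on the relevant subspace: the outputs $\ket{0}\ket{y}$ are disjoint from the outputs with nonzero first register. Hence it extends to a unitary, and it suffices to realize it on basis states with $O(m)$ gates, returning both ancilla registers to $\ket{0}$.

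\textbf{Step 1 (comparison).} Compute $a_i=x_i\oplus y_i$ into the $m$ ancillas $\ket{0}_a$ using $2m$ CNOTs (copy $x_i$, then CNOT from $y_i$). Then $x=y$ iff $a=0\cdots0$. Next compute the flag $b=[a=0]$ into the extra qubit $\ket{0}_b$. To do this, surround an $m$-controlled Toffoli by $X$ gates on the $a$ register. Decompose the Toffoli into $O(m)$ elementary gates using the standard linear-size construction with borrowed (dirty) qubits. The register $y$ and the unused ancilla bits can serve as the borrowed qubits, which is why the lemma says ``borrow''.

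\textbf{Step 2 (conditional erase).} Controlled on $b=1$, we have $x=y$, so apply CNOTs from $y_i$ to $x_i$, each controlled also by $b$. These are $m$ Toffolis, and they turn $x$ into $0$. When $b=0$ nothing happens, so the first register now holds $x\oplus_0 y$.

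\textbf{Step 3 (uncompute, the main obstacle).} The difficulty is that after Step 2 the first register no longer holds $x$ when $b=1$. The naive uncomputation of $a$ and $b$ therefore fails, because in that branch $a$ would read $0\oplus y=y$ instead of $x\oplus y$. The fix is to uncompute in an order adapted to the branch.
\begin{itemize}
\item First clear $a$ branch-dependently. In the $b=1$ branch we have $a=0$ already, and the CNOTs would wrongly reintroduce $y$. So apply the $a$-uncomputing CNOTs ($a_i \mathrel{\oplus}= x_i\oplus y_i$) only controlled on $b=0$, using $O(m)$ controlled gates. In the $b=0$ branch $x$ is intact and $a$ returns to $0$. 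In the $b=1$ branch $a$ was $0$ and stays $0$.
\item Now $a=0$ in both branches, and $b$ must be reset to $0$. We know $b=1$ exactly when the first register equals $0$, using the positivity of $x$. So flip $b$ controlled on the first register being all zeros, with one $m$-controlled Toffoli conjugated by $X$ gates, costing $O(m)$ with borrowed qubits. This uses positivity essentially.
\end{itemize}

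Counting gates, every step uses $O(m)$ elementary gates. By linearity the identity on basis states extends to the superposition $\sum c_{xy}$, which gives the claimed transformation with gate complexity $O(m)$.
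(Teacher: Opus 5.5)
Your circuit is correct and follows the paper's construction essentially step for step: write the bitwise XOR (the paper uses XNOR, which is equivalent up to $X$ gates) of $x$ and $y$ into the $a$ register, set the equality flag $b$ with an $m$-controlled Toffoli, erase $x$ with $b$- and $y$-controlled CNOTs, and reset $b$ with a zero-test on the first register, which relies on $x\neq 0$. The only difference is ordering: the paper uncomputes $a$ right after computing $b$, while $x$ is still intact, so the obstacle you handle in Step 3 by conditioning the uncomputation on $b=0$ never arises there, and plain CNOTs suffice instead of $2m$ controlled gates.
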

\begin{proof}
    For convenience, we omit the summation about $c_{xy}$ and consider $XOR$ operation $\oplus$ and addition $+$ in $\mathbb{F}_2$ as equivalent. Perform the following operations (\ref{equ: oplus2}, \ref{equ: oplus7}-\ref{equ: oplus8}) in sequence:
    \begin{align}
        \prod_{i=0}^{m-1}C_{1i}C_{2i}X_{ai}\prod_{j=0}^{m-1}X_{aj}\ket{x}_1\ket{y}_2\ket{0}_a\ket{0}_b
        ={}&\ket{x}\ket{y}\otimes(\bigotimes_{i=0}^{m-1}\ket{1+x_i+y_i})\ket{0}_b\ ,
        \label{equ: oplus2}
    \end{align}
    where the subscript in the controlled $X$ gate represents the position of the corresponding qubit. For example, $\bm{\bar{C}}_1^mC_{2j}X_{aj}$ is a controlled $X$ gate where anti-control qubits are $m$ qubits in the first register, control qubit is the $j$-th qubit in the second register and the target qubit is the $j$-th qubit in the ancillary register $\ket{0}_a$.
    Define
    \begin{equation}
        \mathbf{1}_{x=y}\equiv\prod_{i=0}^{m-1}(1+x_i+y_i)=\left\{
        \begin{array}{rl}
             1, &  \text{if}\ \  x=y\\
             0, &  \text{if}\ \  x\neq y 
        \end{array}
        \right.\ .\label{equ: oplus4}
    \end{equation}
    
    \begin{align}
        &\prod_{i=0}^{m-1}C_{1i}C_{2i}X_{ai}\prod_{j=0}^{m-1}X_{aj}\cdot C^m_aX_{b}\ket{x}_1\ket{y}_2(\bigotimes_{k=0}^{m-1}\ket{1+x_k+y_k}_a)\ket{0}_{b} \notag\\
        =&\prod_{i=0}^{m-1}C_{1i}C_{2i}X_{ai}\prod_{j=0}^{m-1}X_{aj}\ket{x}_1\ket{y}_2(\bigotimes_{k=0}^{m-1}\ket{1+x_k+y_k}_a)\ket{\mathbf{1_{x=y}}}_{b}=\ket{x}\ket{y}\ket{0}_a\ket{\mathbf{1_{x=y}}}_b\ ,
        \label{equ: oplus7}
    \end{align}
    
    \begin{align}
        &\bm{\bar{C}}_1^mX_{b}\prod_{j=0}^{m-1}C_{b}C_{2j}X_{1j}\ket{x}_1\ket{y}_2\ket{0}_a\ket{\mathbf{1_{x=y}}}_{b} \notag\\
        ={}&\bm{\bar{C}}_1^mX_{b}(\bigotimes_{j=0}^{m-1}\ket{x_j\oplus y_j\cdot\mathbf{1}_{x=y}})\ket{y}\ket{0}_a\ket{\mathbf{1_{x=y}}}_{b}=\bm{\bar{C}}_1^mX_{b}\ket{x\oplus_0y}_1\ket{y}_2\ket{0}_a\ket{\mathbf{1_{x=y}}}_{b} \notag\\
        ={}&\ket{x\oplus_0y}\ket{y}\ket{0}_a\ket{\mathbf{1_{x=y}\oplus\mathbf{1}_{x=y}}}=\ket{x\oplus_0y}\ket{y}\ket{0}_a\ket{0}_{b}\ .
        \label{equ: oplus8}
    \end{align}
    Collecting all the above operations, this algorithm requires $2m$ $X$ gates, $3m$ controlled $X$ gates with 2 controlling qubits and $2$ controlled $X$ gates with $m$ controlling qubits. By results about the decomposition of $MCX$ gate~\cite{gidney2015constructing}, this algorithm is implementable with the gate complexity of $O(m)$.
\end{proof}

Proof of {\bf Theorem~\ref{thm: qdp-HC-P}}:
\begin{proof}
    Let $m=\lceil\log_2n\rceil$. We need $nm$ qubits to encode a Hamiltonian cycle $\sigma_n$ of length $n$ into $\ket{\sigma_n}=\bigotimes_{i=0}^{n-1}\ket{\sigma_n(i)}$. We prove the theorem by mathematical induction.
    
    Beginning from $\ket{0}^{\otimes nm}\ket{0}^{\otimes(m+1)}_a$, where $\ket{0}^{\otimes(m+1)}_a$ represents ancillary qubits, we can obtain the state $\ket{0}^{\otimes m}\ket{21}\ket{0}^{\otimes(n-3)m}\ket{0}^{\otimes(m+1)}_a$ encoding $HC_2=\{\sigma_2=[2,\ 1]\}$ by applying some appropriate $X$ gates.
    Suppose we have already prepared the state 
    \begin{equation}
        \ket{0}^{\otimes m}\left(\sum_{\sigma_{k-1}\in HC_{k-1}}\frac{1}{\sqrt{(k-2)!}}\bigotimes_{j=1}^{k-1}\ket{\sigma_{k-1}(j)}\right)\ket{0}^{\otimes(n-k)m}\ket{0}^{\otimes(m+1)}_a.
        \label{equ: q-ind-HC1}
    \end{equation}
    Perform the following operations in sequence when $k<n$:
    \begin{align}
        &U_1\left[\ket{0}^{\otimes m}\left(\sum_{\sigma_{k-1}\in HC_{k-1}}\frac{1}{\sqrt{(k-2)!}}\bigotimes_{j=1}^{k-1}\ket{\sigma_{k-1}(j)}\right)\ket{0}^{\otimes(n-k)m}\ket{0}^{\otimes(m+1)}_a\right]\notag\\
        &=\ket{0}^{\otimes m}\left(\sum_{\sigma_{k-1}\in HC_{k-1}}\frac{1}{\sqrt{(k-2)!}}\bigotimes_{j=1}^{k-1}\ket{\sigma_{k-1}(j)}\right)\otimes(G_k\ket{0}^{\otimes m})\otimes\ket{0}^{\otimes(n-k-1)m}\ket{0}^{\otimes(m+1)}_a \label{equ: q-ind-HC2}\\
        &=\ket{0}^{\otimes m}\left(\sum_{\sigma_{k-1}\in HC_{k-1}}\frac{1}{\sqrt{(k-2)!}}\bigotimes_{j=1}^{k-1}\ket{\sigma_{k-1}(j)}\right)\otimes(\sum_{i=1}^{k-1}\frac{1}{\sqrt{k-1}}\ket{i})\otimes\ket{0}^{\otimes(n-k-1)m}\ket{0}_a^\otimes.
        \label{equ: q-ind-HC3}
    \end{align}
    In formula (\ref{equ: q-ind-HC2}), $G_k$ represents the Grover algorithm with zero theoretical failure rate \cite{long2001grover}. By {\bf Lemma~\ref{lem: oplus}}, we can realize the operation
    
    \begin{align}
        &U_2\left[\ket{0}^{\otimes m}\left(\sum_{\sigma_{k-1}\in HC_{k-1}}\frac{1}{\sqrt{(k-2)!}}\bigotimes_{j=1}^{k-1}\ket{\sigma_{k-1}(j)}\right)\otimes(\sum_{i=1}^{k-1}\frac{1}{\sqrt{k-1}}\ket{i})\otimes\ket{0}^{\otimes(n-k-1)m}\ket{0}^{\otimes(m+1)}_a\right]\notag\\
        &=\ket{0}^{\otimes m}\left(\sum_{\sigma_{k-1}\in HC_{k-1}}\sum_{i=1}^{k-1}\frac{1}{\sqrt{(k-1)!}}(\bigotimes_{j=1}^{k-1}\ket{\sigma_{k-1}(j)\oplus_0i})\otimes\ket{i}\right)\ket{0}^{\otimes(n-k-1)m}\ket{0}^{\otimes(m+1)}_a.
        \label{equ: q-ind-HC5}
    \end{align}
    Isolate the following part from (\ref{equ: q-ind-HC5}):
    \begin{equation}
        \sum_{\sigma_{k-1}\in HC_{k-1}}\sum_{i=1}^{k-1}\frac{1}{\sqrt{(k-1)!}}(\bigotimes_{j=1}^{k-1}\ket{\sigma_{k-1}(j)\oplus_0i})\otimes\ket{i}\ket{0}_a\ ,
        \label{equ: q-ind-HC6}
    \end{equation}
    where $\ket{0}_a$ is the $0$-th qubit in $\ket{0}^{\otimes(m+1)}_a$. Finally, we need to transform the state $\ket{\sigma_{k-1}(j)\oplus_0i}$ into $\ket{k}$ when $\sigma_{k-1}(j)\oplus_0i=0$:
    \begin{align}
        &\prod_{j=1}^{k-1}C_j(k)X_a\prod_{j=1}^{k-1}C_aX_j(k)\prod_{j=1}^{k-1}\bm{\bar{C}}^m_jX_a\sum_{\sigma_{k-1}\in HC_{k-1}}\sum_{i=1}^{k-1}\frac{1}{\sqrt{(k-1)!}}(\bigotimes_{j=1}^{k-1}\ket{\sigma_{k-1}(j)\oplus_0i}_j)\otimes\ket{i}\ket{0}_a \notag\\
        ={}&\prod_{j=1}^{k-1}C_j(k)X_a\prod_{j=1}^{k-1}C_aX_j(k)\sum_{\sigma_{k-1}\in HC_{k-1}}\sum_{i=1}^{k-1}\frac{1}{\sqrt{(k-1)!}}(\bigotimes_{j=1}^{k-1}\ket{\sigma_{k-1}(j)\oplus_0i}_j)\otimes\ket{i}\ket{\mathbf{1}_{\sigma_{k-1}(j)=i}}_a \notag\\
        ={}&\prod_{j=1}^{k-1}C_j(k)X_a\sum_{\sigma_{k-1}\in HC_{k-1}}\sum_{i=1}^{k-1}\frac{1}{\sqrt{(k-1)!}}(\bigotimes_{j=1}^{k-1}\ket{\sigma_{k-1}(j)\oplus_0i+k\cdot\mathbf{1}_{\sigma_{k-1}(j)=i}}_j)\otimes\ket{i}\ket{\mathbf{1}_{\sigma_{k-1}(j)=i}}_a \notag\\
        ={}&\sum_{\sigma_{k-1}\in HC_{k-1}}\sum_{i=1}^{k-1}\frac{1}{\sqrt{(k-1)!}}(\bigotimes_{j=1}^{k-1}\ket{\sigma_{k-1}(j)\oplus_0i+k\cdot\mathbf{1}_{\sigma_{k-1}(j)=i}}_j)\otimes\ket{i}\ket{0}_a\ ,
        \label{equ: q-ind-HC7}
    \end{align}
    where $\bm{\bar{C}}^m_jX_a$ flips $\ket{0}_a$ when $\ket{\sigma_{k-1}(j)\oplus_0i}_j=\ket{0}_j$, $C_aX_j(k)$ transforms the state $\ket{0}_j$ into $\ket{k}_j$ when the ancillary qubit is in the state $\ket{1}_a$, $C_j(k)X_a$ resets $\ket{1}_a$ to $\ket{0}_a$ by matching the binary representation of $k$ encoded in control qubits $\ket{*}_j$. We can set $i=\sigma_{k-1}(t),\ t\in\{1,\dots,\ k-1\}$ for $\sigma_{k-1}$ is bijective. By {\bf Lemma~\ref{lem: HC-c-generate}}, we have:
    
    \begin{align}
        &\sum_{\sigma_{k-1}\in HC_{k-1}}\sum_{i=1}^{k-1}\frac{1}{\sqrt{(k-1)!}}(\bigotimes_{j=1}^{k-1}\ket{\sigma_{k-1}(j)\oplus_0i+k\cdot\mathbf{1}_{\sigma_{k-1}(j)=i}}_j)\otimes\ket{i}\ket{0}_a \notag\\
        ={}&\sum_{\sigma_{k-1}\in HC_{k-1}}\sum_{t=1}^{k-1}\frac{1}{\sqrt{(k-1)!}}(\bigotimes_{j=1}^{k-1}\ket{\sigma_{k-1}(j)\oplus_0\sigma_{k-1}(t)+k\cdot\mathbf{1}_{\sigma_{k-1}(j)=\sigma_{k-1}(t)}}_j)\otimes\ket{\sigma_{k-1}(t)}\ket{0}_a \notag\\
        ={}&\sum_{\sigma_{k-1}\in HC_{k-1}}\sum_{t=1}^{k-1}\frac{1}{\sqrt{(k-1)!}}(\bigotimes_{j=1}^{t-1}\ket{\sigma_{k-1}(j)})\ket{k}(\bigotimes_{j=t+1}^{k-1}\ket{\sigma_{k-1}(j)})\ket{\sigma_{k-1}(t)}\ket{0}_a \notag\\
        ={}&\sum_{\sigma_{k}\in HC_{k}}\frac{1}{\sqrt{(k-1)!}}\ket{\sigma_k}\ket{0}_a.
        \label{equ: q-ind-HC8}
    \end{align}

    When we have obtained the state $\ket{0}^{\otimes m}\sum_{\sigma_{n-1}\in HC_{n-1}}\frac{1}{\sqrt{(n-2)!}}\ket{\sigma_{n-1}}\ket{0}^{\otimes(m+1)}_a$ by induction, we perform the last step (We don't consider $0$ until the last step because {\bf Lemma~\ref{lem: oplus}} requires $x\neq0$):
    
    \begin{align}
        &U_1\left[\ket{0}^{\otimes m}\left(\sum_{\sigma_{n-1}\in HC_{n-1}}\frac{1}{\sqrt{(n-2)!}}\bigotimes_{j=1}^{n-1}\ket{\sigma_{n-1}(j)}\right)\ket{0}^{\otimes(m+1)}_a\right] \notag\\
        &=(G_{n}\ket{0}^{\otimes m})\otimes\left(\sum_{\sigma_{n-1}\in HC_{n-1}}\frac{1}{\sqrt{(n-2)!}}\bigotimes_{j=1}^{n-1}\ket{\sigma_{n-1}(j)}\right)\ket{0}^{\otimes(m+1)}_a \notag\\
        &=(\sum_{i=1}^{n-1}\frac{1}{\sqrt{n-1}}\ket{i})\otimes\left(\sum_{\sigma_{n-1}\in HC_{n-1}}\frac{1}{\sqrt{(n-2)!}}\bigotimes_{j=1}^{n-1}\ket{\sigma_{n-1}(j)}\right)\ket{0}^{\otimes(m+1)}_a.
        \label{equ: q-ind-HC9}
    \end{align}
    By {\bf Lemma~\ref{lem: oplus}}, we can realize the operation
    \begin{align}
        &U_2\left[(\sum_{i=1}^{n-1}\frac{1}{\sqrt{n-1}}\ket{i})\otimes\left(\sum_{\sigma_{n-1}\in HC_{n-1}}\frac{1}{\sqrt{(n-2)!}}\bigotimes_{j=1}^{n-1}\ket{\sigma_{n-1}(j)}\right)\ket{0}^{\otimes(m+1)}_a\right] \notag\\
        &=\sum_{\sigma_{n-1}\in HC_{n-1}}\sum_{i=1}^{n-1}\frac{1}{\sqrt{(n-1)!}}\ket{i}\otimes(\bigotimes_{j=1}^{n-1}\ket{\sigma_{n-1}(j)\oplus_0i})\ket{0}^{\otimes(m+1)}_a \notag\\
        &=\sum_{\sigma_{n-1}\in HC_{n-1}}\sum_{t=1}^{n-1}\frac{1}{\sqrt{(n-1)!}}\ket{\sigma_{n-1}(t)}\otimes(\bigotimes_{j=1}^{n-1}\ket{\sigma_{n-1}(j)\oplus_0\sigma_{n-1}(t)})\ket{0}^{\otimes(m+1)}_a \notag\\
        &=\sum_{\sigma_{n-1}\in HC_{n-1}}\sum_{t=1}^{n-1}\frac{1}{\sqrt{(n-1)!}}\ket{\sigma_{n-1}(t)}\otimes(\bigotimes_{j=1}^{t-1}\ket{\sigma_{n-1}(j)})\ket{0}^{\otimes m}(\bigotimes_{j=t+1}^{n-1}\ket{\sigma_{n-1}(j)})\ket{0}^{\otimes(m+1)}_a.
        \label{equ: q-ind-HC10}
    \end{align}
    By {\bf Lemma~\ref{lem: HC-c-generate}}, we obtain the state
    \begin{equation}
        \sum_{\sigma_{n}\in HC_{n}}\frac{1}{\sqrt{(n-1)!}}\ket{\sigma_{n}}\ket{0}^{\otimes(m+1)}_a.
        \label{equ: q-ind-HC11}
    \end{equation}

    Now, we analyze the gate complexity. $G_k$ in $U_1$ can be regarded as the amplitude amplification method~\cite{brassard2000quantum} and expanded as 
    \begin{equation}
        G_{k}(\phi,\varphi)\ket{0}^{\otimes u}=(H^{\otimes u}S_0(\phi)H^{\otimes u}S_{\chi}(\varphi))^pH^{\otimes u}\ket{0}^{\otimes u}=\sum_{i=1}^{k-1}\frac{1}{\sqrt{k-1}}\ket{i},\ \ u=\lceil\log_2k\rceil,
        \label{equ: AAM1}
    \end{equation}
    \begin{equation}
        S_0(\phi)\ket{j}=\left\{ \begin{aligned}  
            \exp(i\phi)\ket{j},\ j=0\\
            \ket{j},\ j\neq0
        \end{aligned}\right.\ \ ,\ \ \ 
        S_{\chi}(\varphi)\ket{j}=\left\{ \begin{aligned}  
            \exp(i\varphi)\ket{j},\ 1\leq j\leq k-1\\
            \ket{j},\ j=0\ or\ k\leq j<2^u
        \end{aligned}\right.\ .
        \label{equ: AAM2}
    \end{equation}
    We adopt three dimensional rotation form according to \cite{long2001grover}:
    \begin{gather}
        p=\lceil\frac{\pi}{4\arcsin{\sqrt{(k-1)/2^u}}}-\frac{1}{2}\rceil,\ \ \phi=\varphi=2\arcsin{\frac{\sin{\frac{\pi}{4p+2}}}{\sqrt{(k-1)/2^u}}}.
        \label{equ: AAM3}
    \end{gather}
    Both $S_0(\phi)$ and $S_{\chi}(\varphi)$ can be implemented using $O(u)$ gates, via a technique called phase kick-back~\cite{chiew2019graph, lee2016generalised}. $p=O(\sqrt{2^u/(k-1)})=O(2)$ for $u\leq\log_2k+1$ and $k>2$. Hence, $U_1$ requires the gate complexity of $O(pu)=O(\log_2k)$. By {\bf Lemma~\ref{lem: oplus}}, $U_2$ requires the gate complexity of $O((k-1)m)=O(k\log_2n)$. In (\ref{equ: q-ind-HC7}), $\bm{\bar{C}}^m_jX_a$, $C_aX_j(k)$ and $C_j(k)X_a$ require $O(m)$ elementary gates, which yields a complexity of $O((k-1)m)=O(k\log_2n)$ for (\ref{equ: q-ind-HC7}).
    
    Collecting all the above complexities, our algorithm to prepare the uniform superposition state of Hamiltonian cycles of length $n$ requires the gate complexity of $O(\sum_k(\log_2k+k\log_2n+k\log_2n))=O(n^2\log_2n)$.

    Comparing {\bf Lemma~\ref{lem: permutation-c-generate}} with {\bf Lemma~\ref{lem: HC-c-generate}}, we just need to modify two places for case of permutation. One is the initial state, which should be changed to $\ket{0}^{\otimes m}\ket{1}\ket{0}^{\otimes(n-2)m}\ket{0}^{\otimes(m+1)}_a$. The other is $G_k$, which should be changed to
    \begin{equation}
        G_{k}(\phi,\varphi)\ket{0}^{\otimes u}=(H^{\otimes u}S_0(\phi)H^{\otimes u}S_{\chi}(\varphi))^pH^{\otimes u}\ket{0}^{\otimes u}=\left\{ \begin{aligned}  
            \sum_{i=1}^k\frac{1}{\sqrt{k}}\ket{i},\ 2\leq k<n\\
            \sum_{i=0}^{n-1}\frac{1}{\sqrt{k}}\ket{i},\ \ \ \ \ \ k=n
        \end{aligned}\right.\ \ ,
        \notag
    \end{equation}
    where $u=\lceil\log_2(k+1)\rceil$ for $2\leq k<n$ and $u=\lceil\log_2n\rceil$ for $k=n$. The extra component in the superposition state compared to (\ref{equ: AAM1}) precisely contributes to the additional permutation $\sigma_k$ in {\bf Lemma~\ref{lem: permutation-c-generate}}.
    Similar to (\ref{equ: AAM3}), $p=O(\sqrt{2^u/k})=O(2)$ for $u\leq\log_2(k+1)+1$ and $k\geq2$. Therefore, this algorithm maintains the same complexity for the case of permutation.
\end{proof}

\subsection*{D. The simulation results of our TSP solver.}\label{app4}

\begin{figure}[H]
    \centering
    \begin{subfigure}{0.9\textwidth}
        \centering
        \includegraphics[width=\textwidth]{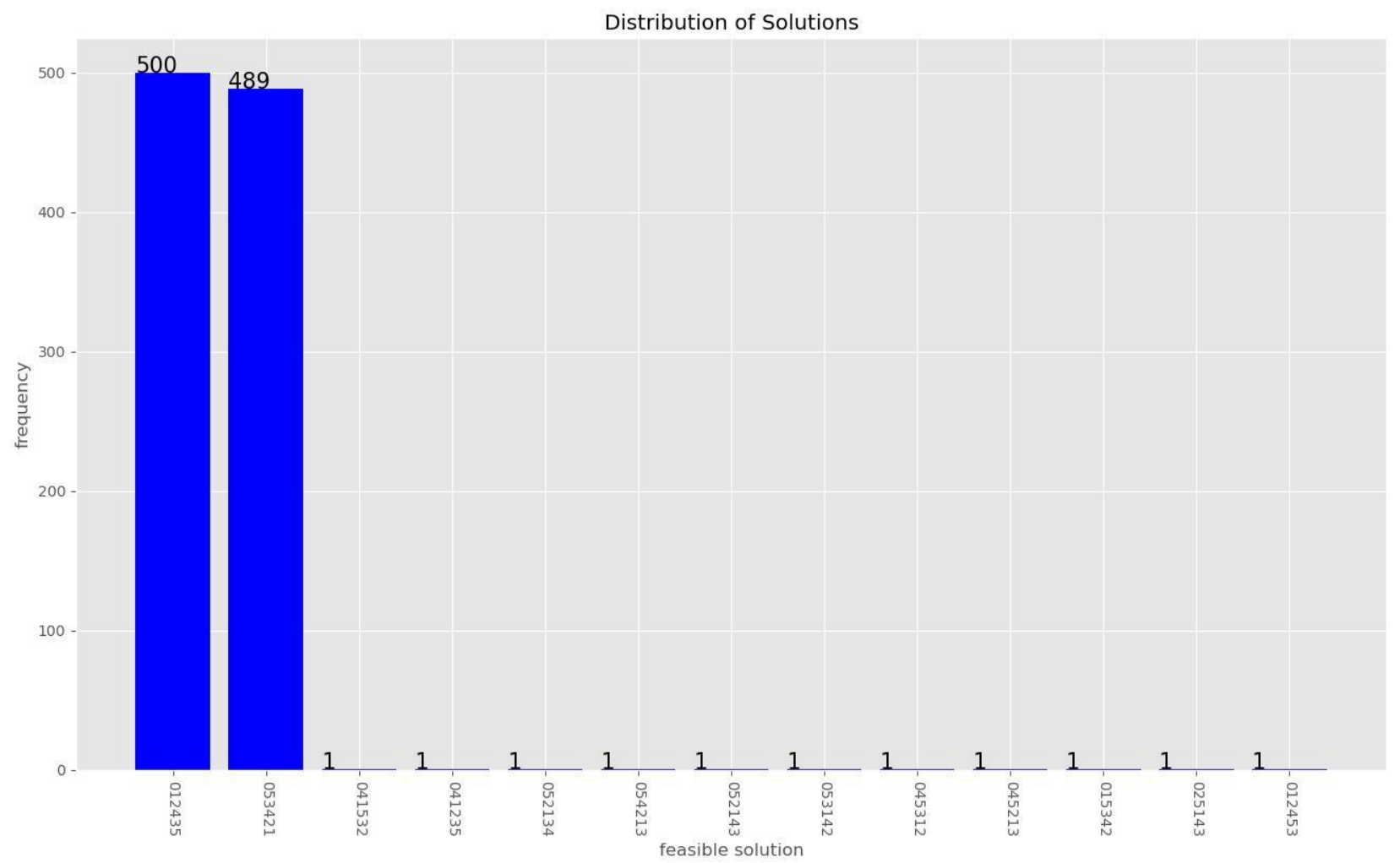}
        \caption{Result of graph with $n=6$.}
        \label{fig: res1}
    \end{subfigure}
    \hfill
    \begin{subfigure}{0.9\textwidth}
        \centering
        \includegraphics[width=\textwidth]{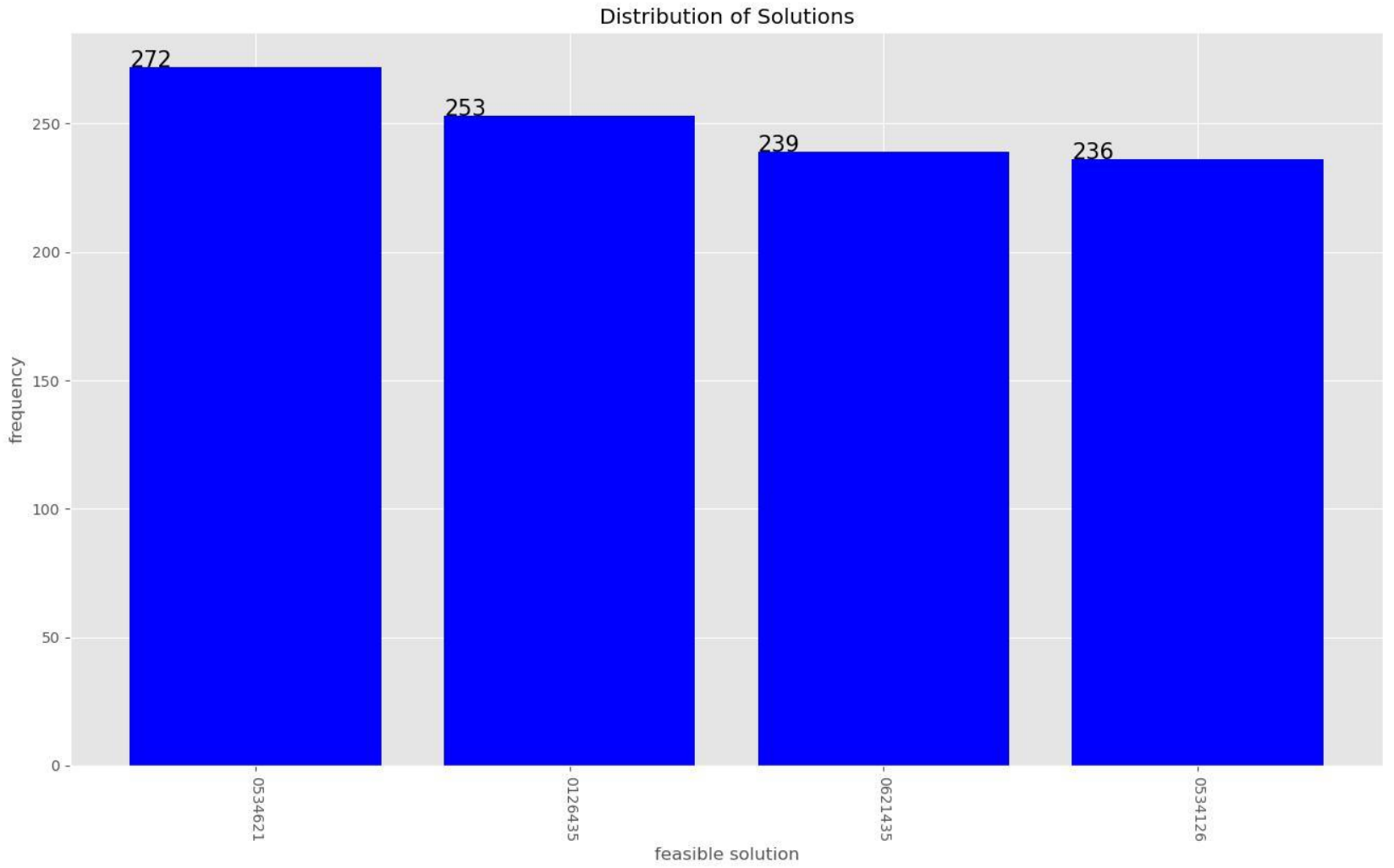}
        \caption{Result of graph with $n=7$.}
        \label{fig: res2}
    \end{subfigure}
    \caption{The simulation verification results of our TSP solver. The feasible solutions are represented by paths. For example, $012435$ represents a route $0\to1\to2\to4\to3\to5\to0$. The bars with significant proportions are all optimal solutions.}
    \label{fig: sim-res}
\end{figure}

\bibliographystyle{plain}
\bibliography{nature}

\end{document}